\newtheorem{lemma}{Lemma}
\newtheorem{definition}{Definition}
\newtheorem{theorem}{Theorem}
\newtheorem{corollary}{Corollary}
\newtheorem{observation}{Observation}
\newcommand{\Prob}{\operatorname{Pr}}
\newcommand*\circled[1]{\tikz[baseline=(char.base)]{
            \node[shape=circle,draw,inner sep=2pt] (char) {#1};}}
\newcommand{\indexf}{\mathbbm{1}}
\newcommand{\bsn}{\{0,1\}^n}
\newcommand{\Zq}{\mathbb{Z}_q}
\newcommand{\EVAL}{\operatorname{EVAL}}
\newcommand{\MQ}{\operatorname{MQ}}
\newcommand{\REX}{\operatorname{REX}}
\newcommand{\KPRFC}{\left\{F_{\theta} : \mathcal{K}_{\theta} \times \mathcal{X}_{\theta} \rightarrow \mathcal{Y}_{\theta} | \theta \in \Theta\right\}}
\newcommand{\CN}{\mathcal{D}}
\newcommand{\DPK}{D_{(\theta, k)}}
\newcommand{\Fpgga}{F_{(p,g,g^a)}}
\newcommand{\QRp}{\operatorname{QR}_p}
\newcommand{\AD}{\mathcal{A}_{\mathcal{D}}}
\newcommand{\AF}{\mathcal{A}_{\mathcal{F}}}
\newcommand{\ryan}[1]{\textcolor{teal}{#1}}
\newcommand{\niklas}[1]{\textcolor{olive}{[#1]}}
\newcommand{\je}[1]{\textcolor{blue}{[#1]}}
\begin{document}

\title{A super-polynomial quantum-classical separation for density modelling}

\author{Niklas Pirnay}
\address{Electrical Engineering and Computer Science, Technische Universit{\"a}t Berlin, Berlin, 10587, Germany}
\author{Ryan Sweke}
\altaffiliation{Currently at IBM Quantum, Almaden Research Center, San Jose, CA 95120, USA.}
\address{Dahlem Center for Complex Quantum Systems, Freie Universit{\"a}t Berlin, Berlin, 14195, Germany}
\author{Jens Eisert}
\address{Dahlem Center for Complex Quantum Systems, Freie Universit{\"a}t Berlin, Berlin, 14195, Germany}
\address{Fraunhofer Heinrich Hertz Institute, 10587 Berlin, Germany}
\author{Jean-Pierre Seifert}
\address{Electrical Engineering and Computer Science, Technische Universit{\"a}t Berlin, Berlin, 10587, Germany}
\address{Fraunhofer SIT, D-64295 Darmstadt, Germany}

\date{\today}

\begin{abstract}
 Density modelling is the task of learning an unknown probability density function from samples, and is one of the central problems of unsupervised machine learning. In this work, we show that there exists a density modelling problem for which fault-tolerant quantum computers can offer a super-polynomial advantage over classical learning algorithms, given standard cryptographic assumptions. Along the way, we provide a variety of additional results and insights, of potential interest for proving future distribution learning separations between quantum and classical learning algorithms. Specifically, we (a) provide an overview of the relationships between hardness results in supervised learning and distribution learning, and (b) show that any \textit{weak} pseudo-random function can be used to construct a classically hard density modelling problem. The latter result opens up the possibility of proving quantum-classical separations for density modelling based on weaker assumptions than those necessary for pseudo-random functions.
\end{abstract}

\maketitle

\section{Introduction}

The task of learning a representation of a probability distribution from samples is of importance in a wide variety of contexts, from the natural sciences to industry. As such, a central focus of modern machine learning is to develop algorithms and models for this task. Of particular importance is the distinction between \textit{density modelling} and \textit{generative modelling}. In density modelling the task is to learn an \textit{evaluator} for a distribution -- i.e., a function which on input of a sample returns the probability weight assigned to that sample by the underlying distribution. As such, density modelling is sometimes referred to, as we do here, as \textit{evaluator learning}. In generative modelling, the task is to learn a \textit{generator} for a distribution -- i.e., a function which given a (uniformly) random input seed outputs a sample with probabilities according to the target distribution. As a result, generative modelling is sometimes referred to as \textit{generator learning}. As an example, in a generative modelling problem the goal might be to generate images of cats or dogs, while the associated density modelling problem would be to evaluate the probability that an image depicts a cat or a dog.
It is important to stress that these two learning tasks are indeed fundamentally different. That is to say, efficiently learning a generator for a distribution does not imply efficiently learning an evaluator and vice versa~\cite{kearns_learnability_1994}. 

Given the incredible success of modern machine learning, and the rapidly increasing availability of \textit{quantum} computational devices, a natural question is whether or not quantum devices can provide any advantage in this domain~\cite{biamonte_qml_2017,dewolf_qlearntheo_2017,lloyd_qml_2013,carleo_mlphys_2019}. Most current research in this direction is of a heuristic nature \citep{benedetti_pqc_2019, cerezo2021variational}. However, there also exists an emerging body of results which provide examples of machine learning tasks for which one can prove rigorously a meaningful separation between the power of classical and quantum computational devices \citep{servedio_equivalences_2004,  dunjko_exponential_2018, liu_rigorous_2021, sweke_generator_2021}, even though results on such rigorous separations are still rather
scarce~\cite{onseparations}. 
One of these, the work of 
Ref.~\citep{sweke_generator_2021} has shown rigorously that one can obtain a quantum advantage in \textit{generative modelling} by constructing a distribution class which is (a) provably hard to generator learn classically, but (b) efficiently generator learnable using a fault-tolerant quantum computer. Importantly however, Ref.~\citep{sweke_generator_2021} has not addressed the related task of \textit{density modelling}. 
  
In this work, we close this gap, by showing that the class of probability distributions constructed in 
Ref.~\citep{sweke_generator_2021} in fact also allows one to demonstrate a super-polynomial quantum-classical separation for density modelling. Additionally, along the way we provide a variety of insights and additional results, which may be of independent interest for constructing future quantum-classical separations in distribution learning. More specifically, in this work we do the following:
  \begin{enumerate}
      \item Any quantum-classical separation requires a proof of classical hardness. The generative modelling separation of 
      Ref.~\citep{sweke_generator_2021} relies crucially on a result from 
      Ref.~\citep{kearns_learnability_1994} which shows that from any \textit{pseudo-random function} (PRF) one can construct a distribution class which is provably hard to generator learn classically. We strengthen this fundamental tool, by showing that for the case of density modelling, any \textit{weak PRF} can be used to construct a distribution class which is provably hard to evaluator learn classically. This opens up the door for proving classical hardness results for density modelling, based on weaker assumptions than those necessary for candidate PRF constructions. In particular, the hope is that one may be able to prove classical hardness using assumptions which do not immediately also rule out the possibility of efficient learning algorithms running on \textit{near-term quantum devices}.
      % In particular, the hope is that one may be able to use assumptions which do not hold for \textit{near-term quantum devices}. \niklas{... which do hold for ... ?}
      % \je{[We should not push it too hard, as it is obvious that all of this still requires a fault tolerant quantum computer. ]} \niklas{ok, I missunderstood the sentence then}\ryan{[I'm with Niklas here - I think the point of this is that using weak PRF's opens up the possibility of \textit{not} needing a FTQC. Still, the sentence can probably be phrased better, or more directly. What about replacing the last sentence with: ``In particular, the hope is that one may be able to prove classical hardness using assumptions which do not immediately also rule out the possibility of efficient learning algorithms running on \textit{near-term quantum devices}.]}
      \item We prove a super-polynomial quantum-classical separation for density modelling using the distribution class from 
      Ref.~\citep{sweke_generator_2021}. As the distribution class from  
      Ref.~\citep{sweke_generator_2021} is constructed from a PRF, the classical hardness follows immediately given the above mentioned insight that even weak PRFs are sufficient for density modelling hardness. As such, what remains is to provide an efficient quantum evaluator-learner for this distribution class, and we show that a simple modification of the quantum generator-learner from Ref.~\citep{sweke_generator_2021} is sufficient to achieve this. 
      \item The majority of work in computational learning theory has been focused on the task of \textit{supervised learning} Boolean functions. As such, it is natural to ask to which extent hardness results and quantum-classical separations in supervised learning can be leveraged to obtain separations for distribution learning. We provide an overview of the extent to which this is or is not possible, for both generative and density modelling. Once again, the hope is that this provides a toolbox for proving future quantum-classical separations in distribution learning.
  \end{enumerate}
  
This work is structured as follows: We start below in Section~\ref{sec:background} by providing some essential definitions and background from both computational learning theory and cryptography. We note that as this work to a large extent generalizes and extends 
Ref.~\citep{sweke_generator_2021}, we do not provide all necessary background here, and we refer often to 
Ref.~\citep{sweke_generator_2021} for a variety of definitions and constructions.  Given the necessary background we proceed in Section~\ref{sec:results} to present a variety of techniques -- both known and novel -- for proving hardness results in distribution learning from hardness results in supervised learning. Of particular interest is Section~\ref{subsec:classhard}, in which we show that one can use any \textit{weak} PRF to construct a distribution class which is classically hard to evaluator learn.
% to provide as a ``warm-up" an overview of the correspondences between supervised learning and distribution learning. Following this, we prove in Section~\ref{subsec:classhard} that one can construct from any weak PRF a distribution class which is classically hard to evaluator learn.
Using these tools, we then show in Section~\ref{subsec:quanteasy} a super-polynomial quantum-classical separation for density modelling, using the distribution class from 
Ref.~\citep{sweke_generator_2021}. Finally, we conclude in Section~\ref{sec:discussion} with a discussion and outlook.

\section{Background}
\label{sec:background}

\noindent To show a quantum-classical learning separation, on the highest level, one needs to
prove two things: one has to prove classical learning hardness and show efficiency of quantum learning.
To introduce the necessary formalism, we will start in this section by providing an overview of the PAC framework for learning both functions and distributions.
Given this, we will then present a construction from Ref.~\citep{kearns_learnability_1994} which allows one to define distribution classes from function classes in a way which facilitates the conversion of function learning hardness to distribution learning hardness.
Finally, we introduce weak-secure
pseudo-random functions, which will later be used to construct distribution classes, via the aforementioned construction from Ref.~\cite{kearns_learnability_1994}, for which the density modelling problem is provably hard for classical learning algorithms. In what follows, we denote:
% However, we give a quantum algorithm that can easily model the density, yielding the quantum-classical density modelling separation.
% Notation remarks
% \noindent In what follows, we denote:
\begin{itemize}
  \item $\indexf(x,y)$ : the index function evaluating to $1$ if and only if $x=y$,
  \item $x \sim U(\mathcal{X})$ : sample $x$ from the uniform distribution over a set
  $\mathcal{X}$ (sometimes $\mathcal{X}$ is omitted if $\mathcal{X}$ is clear from
  the context),
  \item $\mathrm{poly}(a,b)$ : any polynomial in $a$ and $b$ or sometimes also meaning the set of all polynomials in $a,b$, 
  \item $\bsn$ : the set of bit strings of length $n$,
  \item $a\|b$ :  the concatenation of bit strings $a,b$,
  \item $1^n$  : the bit string consisting of $n$ $1$'s,
  \item $D(x)$ : the probability mass assigned to $x$, if $D$ is a probability distribution,
   \item $d_{TV}(P,Q)$ : the total variation distance between distributions $P$ and $Q$,
  \item $\Zq$ : the residue class ring $\mathbb{Z} / q\mathbb{Z}$.
\end{itemize}
% Alternative:
\begin{comment}
\noindent In what follows, we denote by $\indexf(x,y)$ the index function evaluating to $1$ if and only if $x=y$.
A sample $x$ from the uniform distribution over a set $\mathcal{X}$ is denoted by $x \sim U(\mathcal{X})$, where $\mathcal{X}$ is sometime  omitted if it is clear from the context.
Let $\bsn$ be the set of bit strings of length $n$ and $a\|b$ be the concatenation of bit strings $a,b$ and $1^n$ the bit string consisting of $n$ $1$'s.
We denote $D(x)$ as the probability mass assigned to $x$, if $D$ is a probability distribution.
Furthermore, let $\epsilon, \delta$ be real numbers with $\epsilon, \delta \in [0,1]$ and let 
$\Zq$ be the residue class ring $\mathbb{Z} / q\mathbb{Z}$.
\end{comment}

% Brief sentences about PAC framework
Before introducing the formalism for analyzing distribution learning, we introduce Valiant's PAC learning framework for function learning~\citep{valiant_pac_1984}, which since its proposal is the standard framework for  rigorously analyzing the complexity of supervised learning problems~\cite{kearns1994introduction}.
In it we are concerned with learning some class of functions that map $n$ bits to $m$ bits. At a high level, for any such target function in the class, when given some sort of oracle access to the unknown target function, a learning algorithm should with high probability, output a hypothesis function that is close to the target function.
% Introduce MQ and REX function oracles

For this function learning task, we distinguish between two different types of oracle access to the function $f$ that is to be learned\footnote{We note that one can consider many other types of oracle access as well, such as for example, statistical query access~\cite{sqmodel}.}. Firstly, the membership query oracle to $f$, $\MQ(f)$, which when queried with $x$ yields the tuple $(x,f(x))$. We denote this via
\begin{equation}
\mathrm{query}[\MQ(f)](x) = (x,f(x)).
\end{equation}
The membership query access corresponds to the ability to evaluate $f$ on chosen points. We sometimes refer to the membership query oracle in general without any fixed function simply as $\MQ$.
Secondly, the $\eta$-noisy random example oracle $\REX(f,P,\eta)$ to $f$ is defined via
\begin{equation}
    \mathrm{query}[\REX(f,P,\eta)] = \begin{cases}
      (x, f(x)) \text{ with probability } P(x)(1-\eta) \\
      (x, \lnot f(x)) \text{ with probability } P(x)(\eta)
    \end{cases} \text{,}
\end{equation}
where $P$ is a probability distribution over inputs, $\eta \in [0,1]$ is a noise rate and $\lnot f(x)$ is any element in the image space of $f$ except $f(x)$. If $\eta=0$, we also write $\REX(f,P)$. At a high level, this oracle generates random (possibly noisy) input output tuples from $f$.  If we refer to the random example oracle in general, without any fixed function, we write $\REX(P,\eta)$. Algorithmically, we consider a query to $\MQ$ or $\REX$ to take unit time.

We can now give the definition of a PAC learning algorithm for function classes. Note that here we use the notation $O(f,D)$ to denote some oracle, which might be either $\MQ(f)$ or $\REX(f, D)$.

\begin{definition}[$(\epsilon, \delta, O)$-PAC function learner for $\mathcal{F}$] Let $\mathcal{F}$ be a class of functions, with $f:\{0,1\}^n\rightarrow \{0,1\}^m$ for all $f\in \mathcal{F}$ . Given some fixed $\epsilon, \delta\in (0,1)$, an algorithm $\mathcal{A}$ is 
  an $(\epsilon, \delta, O, D)$-PAC function
  learner $\mathcal{F}$, if for all
  $f \in \mathcal{F}$, when given oracle
  access $O(f,D)$, with
  probability at least $1-\delta$, $\mathcal{A}$
  outputs a hypothesis $h$ satisfying
  % $h \in \mathcal{F}$, with $f,h:\bsn \rightarrow\{0,1\}^m$, such that
  \begin{align}
    \Prob_{x \sim D} \left[ f(x) \neq h(x) \right] \leq \epsilon \text{.}
  \end{align}
  The algorithm $\mathcal{A}$ 
  is an 
  $(\epsilon, \delta, O)$-PAC
  function learner for $\mathcal{F}$ if it is a
  $(\epsilon, \delta, O, D)$-PAC function learner for
  $\mathcal{F}$ for all distributions $D$.
  We call $\mathcal{A}$ an efficient $(\epsilon, \delta, O)$-PAC
  function learner for $\mathcal{F}$ if the time complexity of $\mathcal{A}$ is $O(poly(n))$.
  % If the time complexity of $\mathcal{A}$ is $poly(n, \epsilon^{-1}, \delta^{-1})$ \ryan{[$\epsilon$ and $\delta$ are ``constants" here, so the time complexity shouldn't depend on them. I'll come back and make a note which explains this.]},
  % we call $\mathcal{A}$ efficient.
  We call $\mathcal{F}$ $(\epsilon, \delta, O)$-PAC-hard, if there exists no efficient $(\epsilon, \delta, O)$-PAC function learner for it.
\end{definition}
The above definition refers to fixed accuracy and probability parameters $(\epsilon,\delta)$, but we note that if these parameters are considered as variables, then an efficient learner is taken as one with time complexity $O(\mathrm{poly}(n,1/\epsilon,1/\delta))$. In this case, the algorithm will be efficient with respect to the definition above for any $\epsilon,\delta = \Omega(1/\mathrm{poly}(n))$. Additionally, we stress that the learning algorithm could be either classical or quantum. Indeed, as we will see in this work, it is possible that there exists an efficient quantum learning algorithm for a given class, but no efficient classical learning algorithm.

%
% Generalize PAC function learning to distribution learning
We would now like to 
generalize the PAC framework for learning functions to the natural and important problem of learning distributions.
To formulate this problem rigorously, it is necessary to first introduce the different possible representations of a distribution that one might want to learn, namely \textit{generators} and \textit{evaluators}:
% In fact, learning probability distributions relates to two distinct objects, namely learning a generator and an evaluator, defined by:
%
% Definition of generators and evaluators
\begin{definition}[Generator and evaluator for $D$]
  Let $D$ be a discrete probability distribution over $\bsn$.
  A generator for $D$ is any function $\mathrm{GEN}_{D}:\{0,1\}^m \rightarrow \bsn$ that
  on uniformly random inputs outputs samples according to $D$, i.e.,
  \begin{align}
    \Prob_{x \sim U(\{0,1\}^m)} \left[ \mathrm{GEN}_{D}(x) = y \right] = D(y) \text{.}
  \end{align}
  An evaluator for $D$ is any function $\EVAL_D:\bsn \rightarrow [0,1]$ that evaluates the
  probability mass assigned to a event with respect to $D$, i.e.,
  \begin{align}
    \EVAL_D(x) = D(x) \text{.}
  \end{align}
\end{definition}
% We emphasize the difference that $D$ refers to the object of the probability distribution, while $\EVAL_D$ only reflects the probability
% mass function aspect of $D$.
We note that evaluating and generating are indeed two distinct tasks and in general, the ability to do the one does not imply the ability to do the other.
To avoid any complexity-theoretic loopholes, during the course of this work, we assume that any evaluators or generators of interest are computable in time $poly(n)$.
%
% PAC learning for distributions
With the definition of a generator and an evaluator of a distribution at hand, we can now define PAC learners for distributions.
\begin{definition}[$(\epsilon, \delta)$-PAC generator and evaluator learner for $\mathcal{D}$]
  Let $\mathcal{D}$ be a class of discrete
  probability distributions over $\bsn$. Given some  fixed $\epsilon,\delta\in (0,1)$
  an algorithm $\mathcal{A}$
  is an $(\epsilon, \delta)$-PAC (a) generator (GEN) or (b) evaluator (EVAL) learner of $\mathcal{D}$, if for all
  $D \in \mathcal{D}$, when given access to samples from $D$, with probability at least
  $1-\delta$, $\mathcal{A}$ outputs a
  (a) generator or (b) evaluator for some distribution $D'$, satisfying
  \begin{align}
    d_{TV}(D,D')\leq \epsilon.    
  \end{align}
  We call $\mathcal{A}$ an efficient $(\epsilon,\delta)$-PAC (generator or evaluator) learner for $\mathcal{D}$ if its time complexity is $O(\mathrm{poly}(n))$. We call $\mathcal{D}$ $(\epsilon,\delta)$-PAC (generator or evaluator) hard if there is no efficient $(\epsilon,\delta)$-PAC (generator or evaluator) learner for $\mathcal{D}$.
  % We say that the output of $\mathcal{A}$ is an $\epsilon$-close (a) generator (b) evaluator to $D$. 
  % If the time complexity of $\mathcal{A}$ is $poly(n, \epsilon^{-1}, \delta^{-1})$,
  % we call $\mathcal{A}$ efficient. \ryan{[Hmm, its a bit strange to me to have an $\epsilon$ and $\delta$ in the efficiency, because here they are treated as constants - i.e. fixed inputs to the problem. I think its probably neater to keep them as constants, and make the complexity depend only on $n$. We can add a note about the $\epsilon$ scaling. I will make a suggestion when I come back to here.]}
\end{definition}

%
% Function-induced distribution classes
In this work, we aim at proving a quantum-classical separation for distribution learning and we want to do this by leveraging known classical hardness results for learning functions.
In order to carry PAC function learning hardness results to the distribution learning regime, we 
use a generalization of a construction from 
Ref.~\citep{kearns_learnability_1994}, which allows one to define from any function a corresponding distribution. More specifically, we consider \textit{induced distributions} defined as follows:
\begin{definition}[Induced distribution of $f$] \label{def:induced_dist}
  For any function $f:\bsn \rightarrow \{0,1\}^m$, we define the induced distribution $D_{f, P, \eta}$ as the
  discrete probability distribution over $\{0,1\}^{n+m}$ via
  \begin{align}
    D_{f,P,\eta}(x\|y) = \begin{cases}
    P(x)(1-\eta) \text{, if } f(x)=y \\
    P(x)(\eta/(2^m -1)) \text{, else}
    \end{cases}
  \end{align}
  for $x \in \bsn$ and $y \in \{0,1\}^m$ and $\eta \in [0,1]$ and $P$ any probability distribution over $\bsn$. If $\eta=0$, we simply write $D_{f, P}$.
  Similarly, we define the induced distribution class of the function class $\mathcal{F}$ by $\mathcal{D}_{\mathcal{F},P,\eta} = \{D_{f,P,\eta} | f \in \mathcal{F}\}$ and write $\mathcal{D}_{\mathcal{F},P}$ if $\eta=0$.
\end{definition}
We note that the induced distributions defined above are constructed precisely to allow a direct correspondence between oracle access to the function and sample access to the induced distribution. In particular, we note that a query to $\REX(f,P,\eta)$ is precisely the same as drawing a sample from $D_{f,P,\eta}$.

% PRFs

The final background ingredient we require is that of \textit{pseudo-random functions}, which as we will soon see, allows us to prove distribution learning hardness results for the associated induced distributions.
% Having defined PAC function learning and PAC distribution learning, as well as induced distributions, which can be used to convert function learning hardness to distribution learning hardness \citep{kearns_learnability_1994}, we now attend to the central part in the construction of the classically 
% hard-to-learn distribution classes, namely \textit{pseudo-random functions}.
Intuitively, a pseudo-random function $f$ is one that cannot be distinguished from a completely
random function, by any polynomial-time algorithm that has oracle access to $f$, with non-negligible probability.
Before giving the definition of pseudo-random functions, let $\Theta$ be a parameter set, for which there exists an instance generation algorithm $\mathcal{I G}$ which on input $1^n$ outputs some ``size $n$'' parameter $\theta \in \Theta$.
We call $\mathcal{I G}$ efficient and $\Theta$ efficiently sampleable if $\mathcal{I G}$ runs in time $O(\mathrm{poly}(n))$.
For more details on why we require this efficiently sampleable parameter set, we refer the reader to 
Ref.~\citep{sweke_generator_2021}.
\begin{definition} [Pseudo-random function collection] \label{def:prfs}
A set of efficiently computable functions
\begin{align}
\KPRFC
\end{align}
is called a (a) classic-secure or (b) weak-secure pseudo-random function collection if for all classical probabilistic polynomial time algorithms $\mathcal{A}$, all polynomials $p$, and all sufficiently large $n$, it holds that
 \begin{align}
\left|\Prob_{\underset{\theta \leftarrow \mathcal{I G}\left(1^{n}\right)} {k \sim U\left(\mathcal{K}_{\theta}\right)}} \left[\mathcal{A}^{O\left(F_{\theta}(k, \cdot)\right)}(\theta)=1\right]
-
\Prob_{\underset{\theta \leftarrow \mathcal{I G}\left(1^{n}\right)} {R \sim U\left(F: \mathcal{X}_{\theta} \rightarrow \mathcal{Y}_{\theta}\right)}} \left[\mathcal{A}^{O(R)}(\theta)=1\right]\right|<\frac{1}{p(n)}
 \end{align}
where $U\left(F: \mathcal{X}_{\theta} \rightarrow \mathcal{Y}_{\theta}\right)$ denotes the uniform
distribution over all functions from $\mathcal{X}_{\theta}$ to $\mathcal{Y}_{\theta}$, $\mathcal{K}_\theta$
denotes the key space, $\mathcal{I G}$ is the efficient instance generation algorithm for the parameters $\theta$ and $\mathcal{A}$ is given oracle access to (a) $O(f)=\MQ(f)$ or (b)
$O(f)=\REX(f, U)$.
\end{definition}
Note the core statement of the definition above: Any polynomial time algorithm $\mathcal{A}$
with access to the oracle $O(f)$ cannot determine with non-negligible probability whether $f$ was drawn from the function collection, or is a truly
random function.
While a \textit{classic-secure} PRF cannot be distinguished from a
random function using membership query access to the function, a \textit{weak-secure} PRF cannot be distinguished from a
random function using random example access.
Since there exists an algorithm that can simulate random example queries
using membership queries, membership query access is more powerful
than random example access and any classic-secure PRF is also weak-secure.
% Moreover, note that classic-secure PRF collections are $(\frac{1}{poly(n)}, \frac{1}{2}, \MQ)$-PAC-hard and
% weak-secure PRF collections are $(\frac{1}{poly(n)}, \frac{1}{2}, \REX(U))$-PAC-hard.\\

% We now have at hand the tools to convert PAC function learning hardness results to
% PAC distribution learning hardness via the induced distribution of PRFs. Subsequently, we give an overview of previous work that carried function learning hardness to the distribution learning regime. We will then show
% that certain PAC-hard function classes can be used to obtain EVAL learning hardness.
% Afterwards, we will show a quantum advantage for EVAL learning.

\section{From supervised learning to distribution learning}
\label{sec:results}

\noindent As we have mentioned, showing a quantum-classical distribution learning separation requires us to show two things: classical hardness and efficiency of quantum learning (for the same distribution learning task).
For showing the former, a variety of techniques have been used previously, most of which exploit either PAC-hard functions or PRFs, primarily through the "function to distribution construction" in Definition \ref{def:induced_dist} of the previous section.
In order to consolidate and make explicit these techniques, we provide in this section an overview of known results and methods, as well as two extensions and generalizations. 
In particular, we first provide a theorem which abstracts and generalizes the technique of translating PAC function learning hardness to PAC evaluator learning hardness (used implicitly in Ref.~\cite{kearns_learnability_1994}) for functions that map to $m=O(\log(n))$ bits, even in the case of noisy random examples. Additionally, we then provide a theorem which shows that one can prove evaluator learning hardness for distributions induced by \textit{weak} PRFs. This strengthens, and makes applicable to density modelling, the technique used in Ref.~\cite{kearns_learnability_1994} to prove hardness of \textit{generative} modelling from PRFs.
Table~\ref{table:overview} below puts our unique contributions in the context of prior work. We note that the primary focus of our work is on \textit{density} modelling (i.e., evaluator learning) and we refer to Ref.~\cite{xiao_learning_2010} for a similar study focused on generative modelling, which considers additional "function to distribution" constructions than the one presented here.
\begin{table}[h]
\centering
\begin{tabular}{l|l|l}
Function class $\mathcal{F}$      & EVAL learning $\mathcal{D}_{\mathcal{F},P}$ & GEN learning  $\mathcal{D}_{\mathcal{F
},P}$ \\ \hline
weak PRFs                  & Hard (Theorem \ref{theo:classical_eval_hardness})     & Open question               \\ \hline
PRFs                       & Hard (Corollary of Theorem \ref{theo:classical_eval_hardness})     & Hard 
(Ref.~\citep{kearns_learnability_1994} and Ref.~\cite{xiao_learning_2010}) \\ \hline
PAC-hard $m=1$  &  Hard (follows from Corollary \ref{cor:warmup}) & Not necessarily hard (Ref.~\citep{xiao_learning_2010}) \\ \hline
PAC-hard $m=O(\log(n))$          & Hard (implicit in 
Ref.~\citep{kearns_learnability_1994} -- explicit in Corollary \ref{cor:warmup})   & Open question               \\ \hline
PAC-hard $m=\Omega(\log(n))$          & Open question  & Open question 
\end{tabular}
\caption{Given some hard-to-learn function class $\mathcal{F}$ that contains functions mapping from $n$ bits to $m$ bits, can we obtain EVAL or GEN learning hardness for $\mathcal{D}_{\mathcal{F},P}$?
The table above places our contributions within the context of prior work on this question.
Firstly, in Theorem~\ref{theo:warmup} we generalize the implicit techniques from Ref.~\citep{kearns_learnability_1994} to show in Corollary~\ref{cor:warmup} that PAC-hard functions translate to EVAL learning hardness for functions with $m=O(\log(n))$ output bits.
Secondly, in Theorem~\ref{theo:classical_eval_hardness} we show that EVAL learning hardness can generally be obtained from weak PRFs.}
\label{table:overview}
\end{table}

\subsection{Learning distributions induced by $m=O(\log(n))$-functions}
\label{subsec:warmup}

We begin by establishing (in Theorem~\ref{theo:warmup}) a direct relationship between
PAC learning a function class and PAC evaluator learning the induced distribution class
for functions $f:\{0,1\}^n\rightarrow\{0,1\}^m$, with $m=O(\log(n))$.
This result essentially generalizes and makes explicit a technique used implicitly in Ref.~\citep{kearns_learnability_1994}. In particular, the formulation we provide in Theorem~\ref{theo:warmup} makes clear (a) the applicability of the technique even in the case of \textit{noisy} random example access in the function case 
(i.e., when $\eta\neq 0$) and (b) that one can consider functions with up to logarithmically many output bits. Additionally, this result serves as a warm-up to familiarize the reader with the definitions from Section~\ref{sec:background}. 

We start with a lemma that shows the equivalence between the error (or ``loss'') in function learning and the error in distribution learning.
\begin{lemma}[Equivalence of distribution and function loss]
    \label{lem:pac_to_dtv}
    Let $f,h : \bsn \rightarrow \{0,1\}^m$ be two functions and $D_{f,P}$, $D_{h,P}$ be their two induced distributions. It holds for all distributions $P$ that
    \begin{align}
        \Prob_{x \sim P}[f(x) \neq h(x)] = d_{TV}(D_{f,P}, D_{h,P}) \text{.}
    \end{align}
\end{lemma}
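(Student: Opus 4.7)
The plan is a direct calculation using the definition of total variation distance, exploiting the fact that with $\eta=0$ each induced distribution is supported only on the graph of its underlying function. I will write $d_{TV}(D_{f,P}, D_{h,P}) = \frac{1}{2}\sum_{x,y} |D_{f,P}(x\|y) - D_{h,P}(x\|y)|$ and then partition the outer sum over $x \in \{0,1\}^n$ into two groups according to whether $f(x) = h(x)$ or not.

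For any $x$ with $f(x) = h(x)$, Definition~\ref{def:induced_dist} (with $\eta=0$) gives $D_{f,P}(x\|y) = D_{h,P}(x\|y)$ for every $y \in \{0,1\}^m$: both equal $P(x)$ at $y=f(x)=h(x)$ and vanish elsewhere. Hence such $x$ contribute $0$ to the inner sum over $y$. For any $x$ with $f(x) \neq h(x)$, only two values of $y$ matter. At $y = f(x)$ we have $D_{f,P}(x\|y) = P(x)$ and $D_{h,P}(x\|y) = 0$; at $y = h(x)$ the roles are reversed; and for all other $y$ both distributions vanish. So the inner sum over $y$ equals $2P(x)$ for these $x$.

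Putting the two cases together gives
\begin{align}
d_{TV}(D_{f,P}, D_{h,P}) = \frac{1}{2} \sum_{x : f(x) \neq h(x)} 2 P(x) = \sum_{x : f(x) \neq h(x)} P(x) = \Pr_{x \sim P}[f(x) \neq h(x)],
\end{align}
which is exactly the claimed identity. There is no substantive obstacle here; the only thing to be careful about is the factor of $2$ in the TV-distance definition, which is precisely cancelled by the fact that a disagreement at $x$ causes a mass of $P(x)$ to appear in two distinct positions (at $y=f(x)$ and $y=h(x)$) when one compares the two induced distributions pointwise. The statement therefore follows by bookkeeping, and the lemma will be used in the sequel to convert any PAC function-learning error bound into a TV bound on the induced distributions.
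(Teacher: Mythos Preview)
Your proof is correct and follows essentially the same direct calculation as the paper: both split the sum over $x$ according to whether $f(x)=h(x)$, observe that the inner sum over $y$ contributes $0$ in the agreement case and $2P(x)$ in the disagreement case, and cancel the factor $\tfrac{1}{2}$ from the TV definition. Your version is slightly more explicit about why the disagreement case yields $2P(x)$, but the argument is otherwise identical.
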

\begin{proof}
We have
    \begin{align}
        d_{TV}(D_{f,P}, D_{h,P}) &= \frac{1}{2}\sum_{x\in\bsn}
        \underbrace{\sum_{y\in\{0,1\}^m} \left| D_{f,P}(x\|y) - D_{h,P}(x\|y) \right|}_{\begin{subarray}{1}=0 \text{, if } f(x)=h(x)\\ =2 \times P(x) \text{, if } f(x)\neq h(x)\end{subarray}}\\
        \nonumber
        &= \sum_{x\in\bsn}P(x)\times \left(1- \indexf(f(x),h(x)) \right)\\
        \nonumber
        &= \Prob_{x\sim P}[f(x) \neq h(x)].
    \end{align}
\end{proof}
To prove that hardness of learning a function class implies hardness of evaluator learning the induced distribution class, we will show that if we had an evaluator learner for some induced distribution class, then we can get a function learner for the underlying function class. To do this, we need a way to construct a function hypothesis from a given evaluator. A natural way to do this is to take a hypothesis that on any given input $x$, outputs a $y$, such that $x\|y$ is assigned the highest probability under the evaluator. 
More formally, if $D$ is some discrete probability distribution over $\{0,1\}^{n+m}$, then we let $h$
be defined via $h(x)=\arg \max_y D(x\|y)$. This argmax construction is a natural way to obtain a function hypothesis from an evaluator, and below we show that this is optimal.
\begin{lemma}[Optimal function hypothesis from an evaluator]
\label{lem:spiked_distr}
Let $D$ be some discrete probability distribution over $\{0,1\}^{n+m}$ and let $h$
be defined via $h(x)=\arg \max_y D(x\|y)$, for $x \in \bsn$, $y\in \{0,1\}^m$, then it holds for all functions $f:\bsn \rightarrow \{0,1\}^m$ and all probability distributions $P$ that 
\begin{align}
    {d_{TV}(D, D_{h,P}) \leq d_{TV}(D, D_{f,P})} \text{.}
\end{align}

\end{lemma}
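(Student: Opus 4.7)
The plan is to reduce the total variation distance to a sum of per-$x$ contributions and then argue pointwise (slice by slice in $x$) that the argmax choice $h(x) = \arg\max_y D(x\|y)$ minimizes each term. Concretely, I would first observe that $D_{f,P}(x\|y) = P(x)\,\indexf(y, f(x))$ when $\eta=0$, so that
\begin{align}
d_{TV}(D, D_{f,P}) = \frac{1}{2} \sum_{x \in \bsn} \left( \bigl| D(x\|f(x)) - P(x) \bigr| + \sum_{y \neq f(x)} D(x\|y) \right).
\end{align}
Writing $q_x := \sum_y D(x\|y)$ for the $x$-marginal of $D$ and $a_x := D(x\|f(x))$, the bracketed quantity becomes $|a_x - P(x)| + q_x - a_x$, which depends on $f$ only through $a_x$.

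Next I would do a short case analysis on the sign of $a_x - P(x)$. If $a_x \leq P(x)$, the contribution equals $\tfrac12(P(x) + q_x - 2a_x)$, which is monotonically decreasing in $a_x$; if $a_x > P(x)$, it equals $\tfrac12(q_x - P(x))$, which is constant in $a_x$. In either case the contribution is a non-increasing function of $a_x$, and so is minimized (for each fixed $x$) by choosing $f(x)$ to maximize $D(x\|f(x))$ — exactly the definition of $h(x)$.

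Summing the pointwise inequality over $x \in \bsn$ then yields $d_{TV}(D, D_{h,P}) \leq d_{TV}(D, D_{f,P})$ for every $f$ and every $P$, which is the claim. The step that requires the most care is the case analysis above, since the absolute value inside $|a_x - P(x)|$ behaves differently depending on whether the $D$-mass at the preferred label exceeds $P(x)$ or not; but once one sees that in the ``overshoot'' regime the contribution is independent of $a_x$, and in the ``undershoot'' regime it is strictly improved by increasing $a_x$, the argmax choice wins pointwise and hence globally. No other subtleties arise, since the induced-distribution construction puts all mass (for a given $x$) on a single $y$ and therefore cleanly decouples the optimization across distinct values of $x$.
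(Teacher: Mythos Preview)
Your proof is correct and follows essentially the same route as the paper: both decompose $d_{TV}(D,D_{f,P})$ into a sum of per-$x$ terms of the form $|D(x\|f(x))-P(x)| + \sum_{y\neq f(x)} D(x\|y)$ and then argue pointwise that the argmax choice minimizes each term. The only cosmetic difference is in the final step---the paper substitutes $D(x\|h(x))=P(x)+\alpha$, $D(x\|f(x))=P(x)+\beta$ with $\alpha\geq\beta$ and verifies the resulting inequality, whereas you observe directly that $a\mapsto |a-P(x)|+q_x-a$ is non-increasing; these are the same argument in slightly different clothing.
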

\begin{proof}
  By definition, we have that
    \begin{align}
        d_{TV}(D,D_{h,P}) &= \frac{1}{2}\sum_x
        \sum_y\left|D(x\|y) - D_{h,P}(x\|y)\right| \\
        \nonumber
        &=\frac{1}{2}\sum_x\left(\sum_{y\neq h(x)}D(x\|y) + \left|D(x\|h(x))-P(x)\right|\right)\\
        \nonumber
        &:= \frac{1}{2}\sum_x L(x,h).
    \end{align}
    Now, we would like to show that $d_{TV}(D,D_{f,P}) - d_{TV}(D,D_{h,P}) \geq 0 $. Note, we have that
    \begin{align}
        d_{TV}(D,D_{f,P}) - d_{TV}(D,D_{h,P}) &= \frac{1}{2}\sum_x \left[ L(x,f) - L(x,h)\right]
    \end{align}
    and therefore, it is sufficient to show that $L(x,f) - L(x,h)\geq 0$ for all $x$. To this end,
    \begin{align}
        L(x,f) - L(x,h) &= \left[\left(\sum_{y\neq f(x),h(x)}D(x\|y)\right) + D(x\|h(x)) + \left|D(x\|f(x))-P(x)\right|\right] \\
        \nonumber
        &\qquad - \left[\left(\sum_{y\neq f(x),h(x)}D(x\|y)\right) + D(x\|f(x)) + \left|D(x\|h(x))-P(x)\right|\right]\\
        \nonumber
        &=  \left|D(x\|f(x))-P(x)\right| + D(x\|h(x))  - \left|D(x\|h(x))-P(x)\right| - D(x\|f(x)).
    \end{align}
    Thus, we need to show that
    \begin{align}
        \label{eq:1}
        \left|D(x\|f(x))-P(x)\right| + D(x\|h(x)) \geq \left|D(x\|h(x))-P(x)\right| - D(x\|f(x)).
    \end{align}
    Due to how $h$ is constructed, we have 
    \begin{align}
        D(x\|h(x)) \geq D(x\|f(x))
    \end{align}
    for all $x$ and can express $D(x\|h(x))=
    P(x)+\alpha$ and $D(x\|f(x))=P(x)+\beta$, where
    $1-P(x)\geq\alpha\geq\beta\geq -P(x)$. We plug this into the inequality \eqref{eq:1} and obtain
    \begin{align}
        \left|\beta\right| + \alpha \geq \alpha + \beta \text{.}
    \end{align}
\end{proof}
With these two lemmata in hand, we can prove the following result, which at a high level says that any PAC function learner for a given class of functions (mapping to at most logarithmically many output bits) can be turned into a evaluator learner for the induced distribution class, and vice versa.
\begin{theorem}
  \label{theo:warmup}
  % Let $\mathcal{F} = \{f | f : \bsn \rightarrow \{0,1\}^{O(\log(n)}\}$ be a function collection
  Let $\mathcal{F}$ be some function class consisting only of functions $f : \bsn \rightarrow \{0,1\}^m$, for some $m=O(\log(n))$.
  Let $\mathcal{D}_{\mathcal{F},P,\eta}$ be the induced distribution class for some fixed probability distribution $P$
  over $\bsn$ and $0 \leq \eta < \frac{1}{2}$.
  \begin{itemize}
    \item
      If $\mathcal{D}_{\mathcal{F},P,\eta}$ is efficiently $(\epsilon, \delta)$-PAC
      EVAL learnable, then $\mathcal{F}$ is efficiently
      $(2(\eta+\epsilon), \delta, \REX(P, \eta))$-PAC learnable.
    \item
      If $\mathcal{F}$ is efficiently
      $(\epsilon, \delta, \REX(P, \eta))$-PAC learnable,
      then $\mathcal{D}_{\mathcal{F},P,\eta}$ is efficiently $(\eta+\epsilon, \delta)$-PAC EVAL learnable.
  \end{itemize}
\end{theorem}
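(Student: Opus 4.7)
The plan is to prove both directions via short triangle-inequality arguments, combining Lemmas~\ref{lem:pac_to_dtv} and \ref{lem:spiked_distr} with the observation (noted just after Definition~\ref{def:induced_dist}) that a query to $\REX(f, P, \eta)$ is identically distributed to a sample from $D_{f,P,\eta}$. The single auxiliary fact I rely on repeatedly is $d_{TV}(D_{f,P,0}, D_{f,P,\eta}) = \eta$, which follows by summing over $y$ at each fixed $x$ directly from Definition~\ref{def:induced_dist} (the discrepancy contributes $P(x)\eta$ at $y=f(x)$ and a further $P(x)\eta$ spread uniformly over the remaining $y$, cancelling the $\frac{1}{2}$ in the TV definition).

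For the direction going from an evaluator learner to a function learner, I would answer each $\REX(f,P,\eta)$ query issued by the would-be function learner by drawing a fresh sample from $D_{f,P,\eta}$ and feed these to the assumed evaluator learner, obtaining with probability at least $1-\delta$ an evaluator $\EVAL_{D'}$ with $d_{TV}(D_{f,P,\eta}, D') \leq \epsilon$. The function hypothesis is then the pointwise argmax $h(x) := \arg\max_{y\in\{0,1\}^m} D'(x\|y)$, which is computable in time $\mathrm{poly}(n) \cdot 2^m = \mathrm{poly}(n)$ precisely because $m=O(\log n)$ --- this is the only place the logarithmic restriction on $m$ enters. To bound the error I invoke Lemma~\ref{lem:pac_to_dtv} to rewrite $\Pr_{x\sim P}[f(x)\neq h(x)] = d_{TV}(D_{f,P,0}, D_{h,P,0})$, and then chain triangle inequalities:
\begin{align}
d_{TV}(D_{f,P,0}, D_{h,P,0}) &\leq d_{TV}(D_{f,P,0}, D_{f,P,\eta}) + d_{TV}(D_{f,P,\eta}, D') + d_{TV}(D', D_{h,P,0}) \\
&\leq \eta + \epsilon + d_{TV}(D', D_{f,P,0}) \\
&\leq \eta + \epsilon + d_{TV}(D', D_{f,P,\eta}) + d_{TV}(D_{f,P,\eta}, D_{f,P,0}) \\
&\leq 2(\eta+\epsilon),
\end{align}
where the step from the first to the second line applies Lemma~\ref{lem:spiked_distr} to the distribution $D'$ (with $f$ playing the role of the arbitrary function in that lemma).

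For the reverse direction, I would simulate the noisy oracle by drawing samples from $D_{f,P,\eta}$, invoke the function learner to obtain a hypothesis $h$ with $\Pr_{x\sim P}[f(x)\neq h(x)]\leq \epsilon$, and output the evaluator of $D_{h,P,0}$, whose efficiency follows assuming $P$ itself admits a polynomial-time evaluator. A two-term triangle inequality combined with $d_{TV}(D_{f,P,\eta}, D_{f,P,0}) = \eta$ and Lemma~\ref{lem:pac_to_dtv} then gives $d_{TV}(D_{f,P,\eta}, D_{h,P,0}) \leq \eta + \epsilon$. I expect the main conceptual obstacle --- though technically minor --- to be bookkeeping the noise parameter: both distance lemmas are stated only for \emph{noiseless} induced distributions, so $\eta$ must be introduced by hand via the auxiliary identity, and this substitution is applied twice in the first direction, which is exactly the source of the factor $2$ in the bound $2(\eta+\epsilon)$.
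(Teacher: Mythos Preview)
Your proposal is correct and follows essentially the same route as the paper: the same argmax hypothesis construction, the same invocation of Lemmas~\ref{lem:pac_to_dtv} and~\ref{lem:spiked_distr}, and the same four-term triangle-inequality chain yielding the factor $2(\eta+\epsilon)$. Your explicit statement and verification of the auxiliary identity $d_{TV}(D_{f,P,0},D_{f,P,\eta})=\eta$, and your remark that the evaluator for $D_{h,P,0}$ requires $P$ to be efficiently evaluable, are in fact slight improvements in rigor over the paper, which uses both points tacitly.
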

\begin{proof}
  First statement:
  Assume $\AD$ is an efficient $(\epsilon, \delta)$-PAC EVAL learner for
  $\mathcal{D}_{\mathcal{F},P,\eta}$,
  then $\AD$ outputs an evaluator $\EVAL_{D'}$ to $D_{f,P,\eta}$ with $d_{TV}(D', D_{f,P,\eta}) \leq \epsilon$ for all $D_{f,P,\eta} \in \mathcal{D}_{\mathcal{F},P,\eta}$ with
  probability at least $1-\delta$.
  Note that $D'$ has no subscripts, as it is not necessarily an induced distribution and can have
  any structure. All we know is that $D'$ is $\epsilon$-close to $D_{f,P,\eta}$.
  We construct now the algorithm $\AF$, which simulates $\AD$
  by answering any sample accesses to $D_{f,P,\eta}$ with $\mathrm{query}[\REX(f,P,\eta)]$ and obtains $\EVAL_{D'}$.
  $\AF$ then outputs the function $h$ as an algorithm, which on input $x$
  calculates $p_i = \EVAL_{D'}(x\|y_i)$ for all possible $y_i \in \{0,1\}^m$ and
  returns the $y_i$ with the largest $p_i$. Since $m=O(\log(n))$, this is done in time $O(n)$.
  Due to Lemma~\ref{lem:spiked_distr}, Lemma~\ref{lem:pac_to_dtv} and the triangle inequality, we get
  \begin{align}
      \Prob_{x \sim P}\left[f(x)\neq h(x)\right] &= d_{TV}(D_{f,P}, D_{h,P})\\
      \nonumber
      &\leq d_{TV}(D_{f,P}, D_{h,P,\eta}) + d_{TV}(D_{f,P,\eta}, D') + d_{TV}(D',D_{h,P}) \\
      \nonumber
      &\leq \eta + \epsilon + d_{TV}(D', D_{f,P})\\
      \nonumber
      &\leq \eta + \epsilon + d_{TV}(D', D_{f,P,\eta}) + d_{TV}(D_{f,P,\eta}, D_{f,P})\\
      \nonumber
      &\leq 2(\eta+\epsilon) .
       %\underbrace{d_{TV}(D_{f,P}, D_{h,P,\eta})}_{\leq \epsilon + \eta} + \underbrace{d_{TV}(D_{f'}, D_f)}_{\leq \epsilon}\\
      %&\leq 2\epsilon + \eta \text{.}
  \end{align}
  Thus, $\AF$ is an $(2(\epsilon+\eta), \delta, \REX(P,\eta))$-PAC learner for $\mathcal{F}$.

  Second statement:
  Assume $\mathcal{F}$ is efficiently $(\epsilon, \delta, \REX(P,\eta))$-PAC learnable,
  then there exists an algorithm $\AF$ that
  for all $f\in\mathcal{F}$, with probability $1-\delta$ and query access to $\REX(f, P,\eta)$ outputs a hypothesis $h\in\mathcal{F}$ where
  $\Prob_{x \sim P} \left[ f(x) \neq h(x) \right] \leq \epsilon$.
  We construct the learning algorithm $\AD$ to simulate
  $\AF$ (by answering any queries to the REX oracle with a sample from $D_{f,P,\eta}$) and output
  \begin{align}
    \EVAL_{D_{h,P}}(x\|y) =
    \begin{cases}
      P(x) \text{, if } h(x)=y \\
      0 \text{, else}
    \end{cases}.
  \end{align}
  Since
  \begin{align}
    d_{TV}(D_{h,P}, D_{f,P}) = \Prob_{x \sim P}\left[f(x)\neq h(x)\right] \leq \epsilon,
  \end{align}
  we have
  \begin{align}
    d_{TV}(D_{h,P}, D_{f,P,\eta}) &\leq d_{TV}(D_{h,P}, D_{f,P}) + d_{TV}(D_{f,P}, D_{f,P,\eta})\\
    \nonumber
    &\leq \eta+ \epsilon   \text{.}
  \end{align}
  Thus, $\EVAL_{D_{h,P}}$ is an $(\eta +\epsilon)$-close evaluator of $D_{f,P,\eta}$ and $\AD$ is an $(\eta+ \epsilon, \delta)$-PAC EVAL learner for
  $\mathcal{D}_{\mathcal{F},P,\eta}$.
\end{proof}
% We have derived a direct correspondence between learning a function class $\mathcal{F}$ of
% ``$m=O(\log(n))$ functions'' and evaluator learning the induced distribution class $\mathcal{D}_{\mathcal{F},P,\eta}$.
Theorem~\ref{theo:warmup} has been phrased in terms of efficient learnability -- i.e., it shows how efficient learners for one problem imply efficient learners for another. However, we can straightforwardly rephrase Theorem~\ref{theo:warmup} in terms of hardness implications. More specifically, we obtain from Theorem~\ref{theo:warmup} the following simple corollary: 

\begin{corollary}\label{cor:warmup} Let $\mathcal{F}$ be some function class consisting only of functions $f : \bsn \rightarrow \{0,1\}^m$, for some suitable ${m=O(\log(n))}$. Let $\mathcal{D}_{\mathcal{F},P,\eta}$ be the induced distribution class for some fixed probability distribution $P$
  over $\bsn$ and $0 \leq \eta < \frac{1}{2}$.
  \begin{itemize}
  \item if $\mathcal{D}_{\mathcal{F},P,\eta}$ is $(\eta + \epsilon,\delta)$-PAC $\EVAL$ hard, then $\mathcal{F}$ is
      $(\epsilon, \delta, \REX(P, \eta))$-PAC hard.
 \item if $\mathcal{F}$ is 
      $(2(\eta+\epsilon), \delta, \REX(P, \eta))$-PAC hard, then $\mathcal{D}_{\mathcal{F},P,\eta}$ is  $(\epsilon, \delta)$-PAC
      $\EVAL$ hard.
  \end{itemize}
\end{corollary}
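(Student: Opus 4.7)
The plan is to observe that Corollary~\ref{cor:warmup} is simply the contrapositive rephrasing of Theorem~\ref{theo:warmup}. Since $(\epsilon,\delta,O)$-PAC hardness of a class is defined as the nonexistence of an efficient $(\epsilon,\delta,O)$-PAC learner (and analogously for distribution learning), each bullet of the corollary will follow by taking the contrapositive of the corresponding implication in the theorem. Hence I will not produce a new reduction; I will simply repackage what is already proven.

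Concretely, for the first bullet I would start from the second implication of Theorem~\ref{theo:warmup}: efficient $(\epsilon,\delta,\REX(P,\eta))$-PAC learnability of $\mathcal{F}$ yields efficient $(\eta+\epsilon,\delta)$-PAC EVAL learnability of $\mathcal{D}_{\mathcal{F},P,\eta}$. Taking the contrapositive, the nonexistence of an efficient $(\eta+\epsilon,\delta)$-PAC EVAL learner for $\mathcal{D}_{\mathcal{F},P,\eta}$ precludes the existence of an efficient $(\epsilon,\delta,\REX(P,\eta))$-PAC learner for $\mathcal{F}$, which is exactly the first bullet. For the second bullet I would apply the same move to the first implication of Theorem~\ref{theo:warmup}: if efficient $(\epsilon,\delta)$-PAC EVAL learnability of $\mathcal{D}_{\mathcal{F},P,\eta}$ implies efficient $(2(\eta+\epsilon),\delta,\REX(P,\eta))$-PAC learnability of $\mathcal{F}$, then the absence of the latter entails the absence of the former.

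There is no genuine obstacle here; the only care required is parameter bookkeeping, making sure that the $\epsilon$ appearing in the hardness hypothesis of the corollary corresponds correctly to the $\epsilon$ (respectively $2(\eta+\epsilon)$) appearing in the learnability conclusion of the theorem, and similarly for $\delta$ and the oracle. With that matching made explicit, the corollary follows by the logical equivalence $A \Rightarrow B$ iff $\lnot B \Rightarrow \lnot A$, with no further reduction or distributional argument needed beyond what was already carried out in the proof of Theorem~\ref{theo:warmup}.
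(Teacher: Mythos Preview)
Your proposal is correct and matches the paper's approach exactly: the paper explicitly states that Corollary~\ref{cor:warmup} is obtained by ``straightforwardly rephras[ing] Theorem~\ref{theo:warmup} in terms of hardness implications,'' and provides no additional argument beyond that remark. Your parameter bookkeeping is also right---the first bullet is the contrapositive of the second implication of Theorem~\ref{theo:warmup}, and the second bullet is the contrapositive of the first.
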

As claimed in Table~\ref{table:overview}, the above corollary shows clearly that -- at least for the case of functions with at most logarithmically many output bits -- one can use function learning hardness results to prove distribution learning hardness results, and vice versa. We note that this correspondence holds even for function learning with \textit{noisy} random examples, provided one considers the appropriate associated induced distribution class. Additionally, we note that one can also straightforwardly extend Theorem~\ref{theo:warmup} to the setting in which both the function and distribution learner have \textit{statistical query} access, as opposed to (noisy) random example and sample access as considered here.

% \begin{corollary} [EVAL learning hardness from PAC-hard functions] \label{cor:warmup}
%   Let $\mathcal{F} = \{f | f : \bsn \rightarrow \{0,1\}^{O(\log(n)}\}$ be a $(\frac{1}{poly(n)}, \frac{1}{2}, \MQ)$-PAC-hard function
%   collection and $\mathcal{D}_{\mathcal{F},P,\eta}$ be the induced distribution class for some fixed probability distribution $P$ over $\bsn$ and $0 \leq \eta < \frac{1}{2}$.
%   There exists no poly-time algorithm, that when given sample access to $\mathcal{D}_{\mathcal{F},P,\eta}$
%   outputs a $\frac{1}{poly(n)}$-close evaluator of $\mathcal{D}_{\mathcal{F},P,\eta}$.
% \end{corollary}
% The corollary above can be easily proven by contradiction. If there were a poly-time algorithm that outputs an evaluator for $\mathcal{D}_{f,P,\eta}$, for any $f \in \mathcal{F}$, then that evaluator could be used with the same technique as in the proof of Theorem \ref{theo:warmup} to construct a hypothesis $h$ with $\Prob[f(x) \neq h(x)] \leq \frac{1}{poly(n)}$, resulting in a contradiction.\\
% Since the one can efficiently simulate the $\REX$ oracle with access to the $\MQ$ oracle, the corollary above
% clearly also holds for $(\frac{1}{poly(n)}, \frac{1}{2}, \REX(P,\eta))$-PAC-hard function collections.\\
% Thus, we can directly obtain EVAL learning hardness from $m=O(log(n))$ PAC-hard functions.
% \niklas{check the above: Is it ok to state the corollary as is? It is a little stronger than what we would get directly from theorem 1.}\ryan{[Hmmm, honestly I would prefer to state it more simply. I will try put something in on Monday.]}

\subsection{Evaluator learning hardness from weak PRFs}
\label{subsec:classhard}

In the previous section, we layed out how one can obtain EVAL learning hardness from PAC-hard functions by using the construction of induced distributions, for functions that map to $m=O(\log(n))$ bits.
We are now interested in whether one can obtain EVAL learning hardness in a more general setting, or from other primitives.
We will see in this section that we can indeed obtain EVAL learning hardness by using weak-secure PRFs as the distribution inducing functions. We note that this is very closely related to prior work in Ref.~\cite{kearns_learnability_1994}, where it was implicitly shown that one can use \textit{classic-secure} PRFs to obtain \textit{generator} learning hardness. This was made explicit and generalized in Ref~\cite{sweke_generator_2021}, which used this technique to prove a quantum-classical separation for generative modelling. However, in Ref.~\cite{sweke_generator_2021} it was posed as an open question whether or not one can obtain distribution learning hardness results from \textit{weak} PRFs, and it is this question which we answer in the affirmative here, for the case of \textit{evaluator} learning. Apart from allowing us to prove the quantum-classical distribution learning separation in Section~\ref{subsec:quanteasy}, this result also opens the possibility of proving classical hardness results based on weaker assumptions than those necessary for candidate classic-secure PRFs, or hard to learn function classes.
% has shown a very much related but different result, namely they showed GEN learning hardness from using classic-secure PRFs. Notice here the difference between weak-secure and classic-secure PRFs in \cref{def:prfs}.  Interestingly, in their work, it has been layed out as an open question whether weak-secure PRFs can be used to show distribution learning hardness and in this section we answer this question in the affirmative for the task of EVAL learning.\\

\begin{figure}
\includegraphics[scale=0.4]{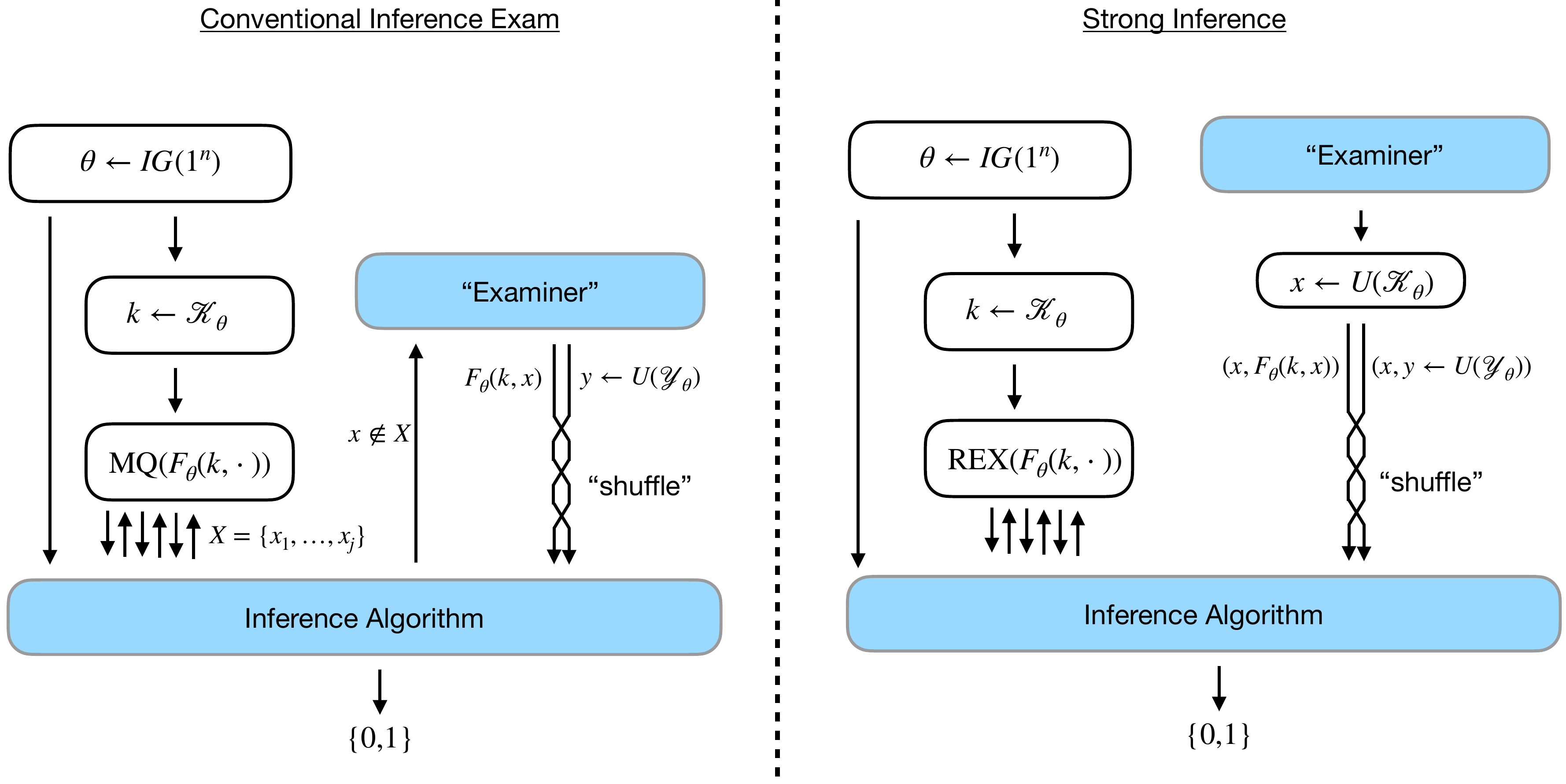}
\caption{Illustration of the difference between a conventional inference exam and a strong inference exam. Conventional inference exams have been introduced in  Ref.~\cite{goldreich_randomfunctions_1986} as an alternative characterization of PRFs, and we note here that strong inference exams can be used for the characterization of weak PRFs.}
\label{fig:inference}
\end{figure}

To this end, we note that there is a very useful characterization of classic-secure PRFs, namely the ability to withstand a so-called \textit{inference exam} \cite{goldreich_randomfunctions_1986}. In particular, this characterization has been crucial for proving generator learning hardness for distributions induced by classic-secure PRFs~\cite{kearns_learnability_1994, sweke_generator_2021}. As shown in Figure~\ref{fig:inference}
the inference exam is a procedure where a learner is tested whether it has really learned anything about the function by asking whether it can distinguish between a function input/output pair and a random input/output pair.
It has been shown that this distinguishing task is not
possible efficiently, with convincing probability, if the function is a classic-secure PRF \citep{goldreich_randomfunctions_1986}. For our proof of  EVAL learning hardness from weak-secure PRFs we will make use of a similar characterization of weak PRFs in terms of a slightly modified inference exam. More specifically, as shown in Figure~\ref{fig:inference}, in a conventional inference exam (defined in Ref.~\cite{goldreich_randomfunctions_1986}), the distinguishing algorithm has (a) \textit{membership query} access to the unknown function, and (b) the ability to choose its own exam string. We will define a \textit{strong} inference exam, in which the distinguishing algorithm has (a) only random example access to the unknown function, and (b) gets given an exam pair drawn at random. Intuitively, we call this a \textit{strong} inference exam as it is harder to pass than the conventional inference exam, due to the weaker oracle access and lack of ability to choose the exam. As we will see, we can characterize weak-PRFs in terms of the existence or non-existence of algorithms which succeed in the task of strong-inference, analogously to how classic-secure PRFs can be characterized in terms of inference exams.

% \ryan{What follows beneath here needs rewriting and integration with the figure.}
% Intuitively, while classic-secure PRFs cannot be inferred in the \textit{weak inference exam}, weak-secure PRFs cannot be inferred in the so-called \textit{strong inference exam}.
% This difference becomes clear, when one considers that the learner in the weak inference exam has access to the membership query oracle of the function, while in the strong inference exam, the learner merely has access to the random example oracle.\\
Let us now describe the strong inference exam, that is used to characterize weak-secure PRFs and we refer the interested reader to Ref.~\cite{goldreich_randomfunctions_1986, sweke_generator_2021} for the details of standard inference exams. In the following we consider function collections $\KPRFC$ which have the structure of PRF collections, where $\mathcal{K}_\theta$ is the so-called secret key space. In particular, let $k \in \mathcal{K}_\theta$ be some fixed but unknown secret key. We then define a strong inference exam as follows:
\begin{definition}[Strong inference exam]
    \label{def:infexam}
    Let $\mathcal{A}$ be some probabilistic polynomial time classical algorithm that ``takes the exam''.
    On input $\theta \in \Theta$, $\mathcal{A}$ can carry out any computation while having access to $\REX(F_{\theta}(k,\cdot), U)$.
    After some time, when $\mathcal{A}$ signals that it is ready for the exam, $\mathcal{A}$ is presented two pairs $(x', f_1)$ and $(x', f_2)$ in random order, where $x' \sim U(\mathcal{X}_\theta)$, $f_1 = F_{\theta}(k, x')$ and $f_2 \sim U(\mathcal{Y}_\theta \setminus \{F_\theta(k, x')\})$. We say that $\mathcal{A}$ 
    ``passes the exam'' 
    if it correctly guesses which of the two pairs
    stems from the function $F_\theta$ and which was the random value from $U(\mathcal{Y}_\theta)$.
\end{definition}

Analogously to the definition of $Q$-inference in 
Ref.~\citep{goldreich_randomfunctions_1986}, we now define the notion of \textit{strong} Q-inference, which defines a lower bound on the probability of $\mathcal{A}$ passing the random example inference exam.

\begin{definition}[Strong Q-inference]
    \label{def:qinf}
    Let $Q$ be some function. We say that $\mathcal{A}$ strongly $Q$-infers the collection
    $\KPRFC$ if for infinitely many $n$, given input $\theta \in \Theta$, it holds
    that
    \begin{align}
        \Prob[\text{"$\mathcal{A}$ passes the exam"}] \geq 1/2 + 1/Q(n) \text{,}
    \end{align}
    where the probability is taken uniformly over all possible choices of
    $\theta\leftarrow\mathcal{IG}(1^n), k \in \mathcal{K}_{\theta}$, $x' \in \mathcal{X}_\theta$,
    $f_2 \in \mathcal{Y}_{\theta}$, and all possible orderings of the exam pairs.
    We say that a function collection can be polynomially strongly inferred if there exists a
    polynomial $Q$ and a probabilistic polynomial time algorithm $\mathcal{A}$ which strongly
    $Q$-infers the collection.
\end{definition}

We are now ready to state a core lemma, that is analogous to the result in Ref.~\citep{goldreich_randomfunctions_1986}, which characterizes PRFs via the strong polynomial Q-inference exam.

\begin{lemma}[Weak-secure PRFs cannot be polynomially strongly inferred]
    \label{lemma:qinf_prf}
    Let $\mathcal{F} = \KPRFC$ be a collection of efficiently computable functions.
    $\mathcal{F}$ is weak-secure pseudo-random if and only if $\mathcal{F}$ cannot be polynomially strongly inferred.
\end{lemma}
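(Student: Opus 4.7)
I would prove the biconditional by contraposing each direction and reducing between the two adversarial tasks.

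\emph{Forward} (weak-secure $\Rightarrow$ no polynomial strong inference): given a probabilistic polynomial time algorithm $\mathcal{A}$ that strongly $Q$-infers $\mathcal{F}$, build a polynomial time distinguisher $\mathcal{B}$ against weak-pseudorandomness. On input $\theta$ and $\REX$ access to an unknown $f$ (either $F_\theta(k,\cdot)$ or a uniformly random function $R$), $\mathcal{B}$ runs $\mathcal{A}$ internally and forwards every $\REX$ query to its own oracle. When $\mathcal{A}$ signals ready for the exam, $\mathcal{B}$ issues one further $\REX$ query to obtain a pair $(x',y_1)$, samples $y_2\sim U(\mathcal{Y}_\theta\setminus\{y_1\})$, presents $\{(x',y_1),(x',y_2)\}$ in a uniformly random order, and outputs $1$ iff $\mathcal{A}$ selects the correct pair. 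In the $F_\theta(k,\cdot)$ case the simulated exam is distributed exactly as in Definition~\ref{def:infexam}, so $\Prob[\mathcal{B}=1]\geq 1/2+1/Q(n)$. In the random $R$ case, with overwhelming probability (since $|\mathcal{X}_\theta|$ is super-polynomial) $x'$ has not been sampled in any previous $\REX$ query, and conditional on this the joint distribution of $(R(x'),y_2)$ is symmetric in its two coordinates, making the two exam pairs exchangeable from $\mathcal{A}$'s view and giving $\Prob[\mathcal{B}=1]\leq 1/2+\mathrm{negl}(n)$. The resulting non-negligible advantage $1/Q(n)-\mathrm{negl}(n)$ contradicts weak-pseudorandomness.

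\emph{Backward} (no polynomial strong inference $\Rightarrow$ weak-secure): suppose, contrapositively, that a polynomial time $\mathcal{B}$ distinguishes $\REX(F_\theta(k,\cdot),U)$ from $\REX(R,U)$ with advantage $\epsilon(n)\geq 1/p(n)$ using at most $q=q(n)$ queries, and build a strong inferrer $\mathcal{A}$ by a standard hybrid argument. Define $H_i$ (for $0\le i\le q$) to answer the first $i$ $\REX$ queries with freshly drawn uniform pairs in $\mathcal{X}_\theta\times\mathcal{Y}_\theta$ and the remaining $q-i$ from $\REX(F_\theta(k,\cdot),U)$; then $H_0$ is the PRF oracle and $H_q$ matches the random-function oracle up to a negligible collision term. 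Writing $p_i:=\Prob[\mathcal{B}(H_i)=1]$ and fixing the sign of the advantage, the telescoping identity gives $\sum_{i=0}^{q-1}(p_i-p_{i+1})\ge \epsilon-\mathrm{negl}(n)$. The inferrer $\mathcal{A}$ then picks $i\in\{0,\ldots,q-1\}$ uniformly, answers the first $i$ queries of $\mathcal{B}$ with freshly sampled uniform pairs, signals ready at query $i+1$, obtains the exam $\{(x',c_1),(x',c_2)\}$, uses $(x',c_1)$ as the $(i+1)$-th response to $\mathcal{B}$, answers the remaining $q-i-1$ queries with its own $\REX$ oracle, and finally declares the first pair to be the real one iff $\mathcal{B}$ outputs $1$. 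A direct computation, using the fact that the ``real'' sub-case of the exam reproduces $H_i$ exactly while the ``fake'' sub-case reproduces $H_{i+1}$ up to statistical distance $1/|\mathcal{Y}_\theta|$ on the affected query, yields $\Prob[\mathcal{A}\text{ passes}]\ge 1/2+\epsilon/(2q)-O(1/|\mathcal{Y}_\theta|)$.

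The principal obstacle is controlling this distributional slack: the ``fake'' exam value is drawn from $U(\mathcal{Y}_\theta\setminus\{F_\theta(k,x')\})$ rather than from $U(\mathcal{Y}_\theta)$, so the inferrer's simulation only approximately realises the target hybrid. For the DDH-based construction used later in the paper $|\mathcal{Y}_\theta|$ is super-polynomial in $n$, so the slack is negligible compared to $\epsilon/(2q)$ and the inferrer's advantage stays at $1/\mathrm{poly}(n)$. In full generality one either refines the hybrid sequence to use ``non-true uniform'' responses throughout (paying $O(q/|\mathcal{Y}_\theta|)$ when relating the terminal hybrid back to the truly random oracle) or, for very small output alphabets, invokes a Yao-style unpredictability reduction to convert the distinguisher directly into a next-value predictor. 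The low-probability event that the exam input $x'$ coincides with an earlier $\REX$ sample contributes another vanishing term, handled identically in both directions.
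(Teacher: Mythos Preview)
Your proposal is correct and is exactly the approach the paper has in mind: the paper's own proof consists of the single sentence ``The proof is a direct generalization of the proof for Theorem 4 in Ref.~\cite{goldreich_randomfunctions_1986},'' and what you have sketched---the distinguisher-from-inferrer construction in one direction and the hybrid argument in the other, with the membership-query oracle of GGM replaced throughout by $\REX$---is precisely that generalization. Your explicit bookkeeping of the $O(1/|\mathcal{Y}_\theta|)$ slack and the collision terms is in fact more careful than anything the paper spells out.
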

\begin{proof}
The proof is a direct generalization of the proof for Theorem 4 in 
Ref.~\citep{goldreich_randomfunctions_1986}.
\end{proof}

With the necessary definitions at hand, we can now present Theorem~\ref{theo:classical_eval_hardness}, which shows that
distributions induced by weak-secure pseudo-random functions are hard to EVAL learn. \newline

% \begin{theorem}[Classical evaluator learning hardness from weak pseudo-random functions]
%     \label{theo:classical_eval_hardness}
%     Let 
%     \begin{align}\mathcal{F} = \left\{F_{\theta} : \mathcal{K}_\theta \times \bsn \rightarrow \bsn | \theta \in \Theta \right\}
%     \end{align}
%     \ryan{[R:the above formulation is slightly wrong, because it makes $n$ seem fixed, the point is that for each $\theta$ there is some $n$, not always the same. See my reformulation in the theorem version below]} be a weak-secure pseudo-random function collection with the property that for all
%     $n$, for all $\theta \in \Theta$ and all $k \in \mathcal{K}_\theta$ we define $\DPK :=D_{F_{\theta}(k,\cdot), U}$ to be the
%     induced probability distribution.
%     For all sufficiently large $n$, given the distribution class
%     $\CN:=\left\{\DPK \mid \theta \in \Theta, k \in
%     \mathcal{K}_{\theta}\right\}$, there exists no efficient classical $(\epsilon, \delta)$-PAC
%     EVAL learner of $\CN$, where $\epsilon = 1/p(n)$, $\delta = 1/2$ and
%     $p(n)$ is
%     a polynomial in $n$. \ryan{[I've made a suggestion for an alternative theorem statement below, which takes into account the new proof.]}
% \end{theorem}

\begin{theorem} [Classical evaluator learning hardness from weak pseudo-random functions] \label{theo:classical_eval_hardness} Let 
    \begin{align}\mathcal{F} = \left\{F_{\theta} : \mathcal{K}_\theta \times \mathcal{X}_{\theta} \rightarrow \mathcal{Y}_{\theta} | \theta \in \Theta\right\}
    \end{align}
    be a weak-secure pseudo-random function collection, where for all $\theta$ one has that $\mathcal{X}_\theta = \mathcal{Y}_\theta = \bsn$ for some $n$. For all $\theta \in \Theta$ and all $k \in \mathcal{K}_\theta$, we define the induced probability distribution $\DPK :=D_{F_{\theta}(k,\cdot), U}$, and the associated distribution class $\CN:=\left\{\DPK \mid \theta \in \Theta, k \in \mathcal{K}_{\theta}\right\}$.
    For all sufficiently large $n$,  all $\epsilon < 1/9$ and $\delta \leq 1/5-\Omega(1/\mathrm{poly}(n))$,  there exists no efficient classical $(\epsilon, \delta)$-PAC
    EVAL learner of $\CN$.
\end{theorem}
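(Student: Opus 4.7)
The plan is to prove the theorem by contradiction, via Lemma~\ref{lemma:qinf_prf}, which forbids weak-secure PRFs from being polynomially strongly inferred. Suppose for contradiction that there exists an efficient classical $(\epsilon,\delta)$-PAC EVAL learner $\AD$ for $\CN$ with $\epsilon < 1/9$ and $\delta \leq 1/5 - \Omega(1/\mathrm{poly}(n))$. From $\AD$ I will build an efficient classical algorithm $\mathcal{A}$ that polynomially strongly infers $\mathcal{F}$, yielding the desired contradiction.

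The algorithm $\mathcal{A}$ proceeds as follows. Given $\theta \in \Theta$ with oracle access to $\REX(F_\theta(k,\cdot), U)$, it first simulates $\AD$, answering every sample request by forwarding one query $\mathrm{query}[\REX(F_\theta(k,\cdot),U)] = (x, F_\theta(k,x))$ with $x \sim U$. Because these pairs are distributed exactly as $\DPK = D_{F_\theta(k,\cdot),U}$, the simulation is perfect, and with probability at least $1-\delta$ it produces an evaluator $\EVAL_{D'}$ satisfying $d_{TV}(D', \DPK) \leq \epsilon$. When subsequently presented with the exam pairs $(x', f_1)$ and $(x', f_2)$, $\mathcal{A}$ evaluates $\EVAL_{D'}$ at $x'\|f_1$ and $x'\|f_2$, outputs the pair with strictly larger value as ``real'', and flips a fair coin in case of a tie. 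Since both $\AD$ and each evaluator query run in $\mathrm{poly}(n)$ time, $\mathcal{A}$ is efficient.

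The analytic heart of the argument is to lower bound $\Prob[\mathcal{A}\text{ passes}]$. Write $y^* := F_\theta(k, x')$. I will exploit the fact that the proof of Lemma~\ref{lem:spiked_distr} only needs $\tilde h$ to return \emph{some} argmax of $D'(x\|\cdot)$ at every $x$, so the lemma holds for the \emph{adversarial} choice $\tilde h$ defined so that $\tilde h(x) = F_\theta(k, x)$ \emph{iff} $F_\theta(k, x)$ is the unique argmax of $D'(x\|\cdot)$, and otherwise picks some other argmax. Combining Lemma~\ref{lem:spiked_distr} applied to this $\tilde h$ with the triangle inequality and Lemma~\ref{lem:pac_to_dtv} yields
\begin{align}
    \Prob_{x' \sim U}\!\left[\,y^* \text{ is not the unique argmax of } D'(x'\|\cdot)\,\right] \leq 2\epsilon.
\end{align}
On the complementary $(1-2\epsilon)$-fraction of $x'$, $D'(x'\|y^*) > D'(x'\|y)$ strictly for every $y \neq y^*$, so $\mathcal{A}$'s comparison correctly identifies the real pair with probability $1$; bounding every other case pessimistically by $0$, the pass probability is at least $(1-\delta)(1-2\epsilon)$. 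Plugging in the stated ranges gives at least $28/45 + \Omega(1/\mathrm{poly}(n))$, which beats $1/2$ by at least $1/Q(n)$ for some polynomial $Q$ (indeed, by a constant), contradicting Lemma~\ref{lemma:qinf_prf}.

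The step I expect to be the main obstacle is the upgrade from ``$y^*$ is \emph{an} argmax of $D'(x'\|\cdot)$'' to ``$y^*$ is the \emph{unique} argmax''. A naive application of Lemma~\ref{lem:spiked_distr} with a fixed tie-breaking rule only guarantees the former, and this is too weak: at each $x'$ where $y^*$ is tied with some $y_{\text{wrong}}$, the comparison strategy wins only with probability $1/2$, and in the worst case the learner can pack enough ties to pull the overall pass probability back under $1/2$. The key observation that dodges this is that Lemma~\ref{lem:spiked_distr} holds equally well for an adversarial tie-breaker that breaks ties \emph{against} $F_\theta(k,\cdot)$; this converts every tie involving $y^*$ into a disagreement of $\tilde h$ with $F_\theta(k,\cdot)$, so the $2\epsilon$ bound now controls exactly the fraction of $x'$ on which the desired strict inequality fails.
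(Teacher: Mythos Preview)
Your argument is correct, and it takes a genuinely different route from the paper's proof. Both proofs share the same outer shell---assume an efficient EVAL learner, simulate it via the $\REX$ oracle, and use the resulting evaluator to pass the strong inference exam, contradicting Lemma~\ref{lemma:qinf_prf}---but the exam strategy and its analysis differ. The paper's algorithm does not compare $\EVAL_{D'}(s_1)$ and $\EVAL_{D'}(s_2)$ directly; instead it tests each value against two fixed thresholds, $\epsilon/2^n$ and $4\epsilon/(2^{2n}-2^n)$, and relies on two new technical lemmas (Lemmas~\ref{lem:alpha1} and~\ref{lem:alpha2}) showing that, whenever $d_{TV}(D',\DPK)\leq\epsilon$, at least $3/4$ of the ``true'' strings and at least $1/2$ of the ``random'' strings fall on the correct side of the respective thresholds. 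A short case analysis over the branches of the threshold test then yields $\Prob[\text{pass}\mid\tilde{\mathcal A}\text{ successful}]\geq 5/8$, and hence $\Prob[\text{pass}]\geq\tfrac{5}{8}(1-\delta)$.

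Your approach sidesteps the threshold lemmas entirely by reusing Lemmas~\ref{lem:pac_to_dtv} and~\ref{lem:spiked_distr}. The adversarial tie-breaking observation---that the proof of Lemma~\ref{lem:spiked_distr} only needs $h(x)$ to be \emph{some} argmax, so one may break ties against $F_\theta(k,\cdot)$---is the key insight that lets you control the ``unique argmax'' event rather than merely the ``argmax'' event, and it is exactly right. The payoff is a shorter argument with no new lemmas and no branch-by-branch case analysis, and in fact a sharper conditional bound $(1-2\epsilon)>7/9$ in place of the paper's $5/8$; your proof would therefore also establish hardness for the slightly larger range $\delta\leq 5/14-\Omega(1/\mathrm{poly}(n))$. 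The paper's threshold approach, on the other hand, is more self-contained (it does not require revisiting the proof of Lemma~\ref{lem:spiked_distr}) and makes the decision rule entirely explicit.
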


In order to prove Theorem  \ref{theo:classical_eval_hardness}, we require the following two technical lemmas. 

\begin{lemma}
\label{lem:alpha1}
Let $f:\{0,1\}^n\rightarrow\{0,1\}^n$ and let $D'$ be some distribution satisfying $d_{TV}(D_{f,U},D')\leq \epsilon$, for some $\epsilon < 1/9$. Then, for at least $3/4 \times2^n$ strings $x$, it holds that 
\begin{align}
    D'(x\| f(x)) \geq \frac{\epsilon}{2^n}.
\end{align}
\end{lemma}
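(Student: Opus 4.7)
\textbf{Proof plan for Lemma~\ref{lem:alpha1}.} The plan is to argue by contrapositive via a clean counting argument. Let $B := \{x \in \bsn : D'(x\|f(x)) < \epsilon/2^n\}$ be the ``bad'' set, and suppose toward a contradiction that $|B| > 2^n/4$. I will show this forces $d_{TV}(D_{f,U}, D') > \epsilon$, contradicting the hypothesis.

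The key observation is that, because $P = U$ is uniform and $\eta = 0$, the induced distribution puts mass exactly $1/2^n$ on each point $(x, f(x))$ and zero elsewhere. So every $x \in B$ already pins down a single sample-space point on which $D_{f,U}$ and $D'$ differ substantially: the pointwise discrepancy at $(x, f(x))$ is at least $1/2^n - \epsilon/2^n = (1-\epsilon)/2^n$. The points $(x, f(x))$ for different $x \in B$ are distinct in $\{0,1\}^{2n}$, so there is no double-counting. Using just these contributions in the $\ell_1$ formula for total variation gives
\begin{align}
  d_{TV}(D_{f,U}, D') \;\geq\; \frac{1}{2}\sum_{x \in B}\bigl| D_{f,U}(x\|f(x)) - D'(x\|f(x)) \bigr| \;>\; \frac{1}{2}\cdot\frac{2^n/4}{2^n}\,(1-\epsilon) \;=\; \frac{1-\epsilon}{8}.
\end{align}

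The final step is an arithmetic check: the hypothesis $\epsilon < 1/9$ is equivalent to $(1-\epsilon)/8 > \epsilon$, so the displayed lower bound exceeds $\epsilon$, contradicting $d_{TV}(D_{f,U}, D') \leq \epsilon$. Hence $|B| \leq 2^n/4$, i.e.\ at least $3/4 \cdot 2^n$ strings $x$ satisfy $D'(x\|f(x)) \geq \epsilon/2^n$, as claimed.

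I do not anticipate a real technical obstacle here -- the argument is a one-shot counting bound. The only care needed is (i)~noting that the points $(x, f(x))$ indexed by $x \in B$ are genuinely distinct in the sample space, so the lower bound on $d_{TV}$ is valid without any overlap correction, and (ii)~verifying that the constants $3/4$ and $1/9$ are matched so that $(1 - \epsilon)/8 > \epsilon$ is exactly the right inequality. If one wanted to strengthen the conclusion (for instance, replace $3/4$ by $1 - \alpha$), the same argument yields a trade-off $\epsilon < \alpha/(1+\alpha)$, which would become relevant if Theorem~\ref{theo:classical_eval_hardness} later needed a different parameter regime.
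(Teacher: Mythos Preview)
Your proof is correct and essentially identical to the paper's: both argue by contradiction, lower-bound $d_{TV}(D_{f,U},D')$ using only the contributions at the points $(x,f(x))$ with $x$ in the bad set, and finish with the arithmetic $(1-\epsilon)/8 > \epsilon \iff \epsilon < 1/9$. One minor slip in your closing parenthetical (which does not affect the lemma itself): the generalized trade-off should read $\epsilon < \alpha/(2+\alpha)$, not $\alpha/(1+\alpha)$ --- check $\alpha = 1/4$, which must recover $1/9$.
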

\begin{proof}
Per contradiction, assume that the claim is false, thus, it holds that for at least 
$1/4 \times 2^n$ strings $x$, 
\begin{align}
D'(x\| f(x)) < \frac{\epsilon}{2^n}. 
\end{align}
It follows that 
\begin{align}
     d_{TV}(D_{f,U},D') &= \frac{1}{2} \sum_{x,y \in \{0,1\}^{n}} {\left| D_{f,U}(x\|y) - D'(x\|y) \right|}\\
     \nonumber
    &= \frac{1}{2} \sum_{x\in \{0,1\}^n}{\left| \frac{1}{2^n} - D'(x\| f(x)) \right|} + \sum_{\text{other } x,y} {\left| D'(x\|y) \right|}\\
    \nonumber
    &\geq \frac{1}{2} \sum_{x\in \{0,1\}^n}{\left| \frac{1}{2^n} - D'(x\| f(x)) \right|}\\
    \nonumber
    &> \frac{1}{2} \left( \frac{2^n}{4} \right) \left[ \frac{1}{2^n} - \frac{\epsilon}{2^n} \right]\\ %assumption put in here
    \nonumber
    &= \frac{1}{8} (1-\epsilon)\\
    \nonumber
    &> \epsilon \qquad\qquad\qquad\qquad\qquad\left(\text{when } \epsilon < \frac{1}{9} \right)
\end{align}
which contradicts the assumption.
\end{proof}

\begin{lemma}
    \label{lem:alpha2}
    Let $f:\{0,1\}^n\rightarrow\{0,1\}^n$ and let $D'$ be some distribution satisfying $d_{TV}(D_{f,U},D')\leq \epsilon$. For at least $\frac{1}{2}\times \left( 2^{2n}-2^n \right)$ of the strings $x\|y$ with $y\neq f(x)$, it holds that 
    \begin{align}
    D'(x\|y) \leq \frac{4\epsilon}{2^{2n}-2^n}.
    \end{align}
\end{lemma}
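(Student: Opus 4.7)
The plan is a direct two-step argument: first extract from the total variation bound an upper bound on the total $D'$-mass sitting on the ``off-support'' pairs $x\|y$ with $y\neq f(x)$, then apply a Markov-style counting argument to conclude that most such pairs individually carry small mass.

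First I would use the fact that $D_{f,U}(x\|y) = 0$ whenever $y \neq f(x)$. Splitting the defining sum for total variation distance according to whether $y = f(x)$ or not gives
\begin{align}
\epsilon \;\geq\; d_{TV}(D_{f,U},D')
&= \frac{1}{2}\sum_{x\in\bsn}\left| \tfrac{1}{2^n} - D'(x\|f(x))\right| + \frac{1}{2}\sum_{x\in\bsn}\sum_{y\neq f(x)} D'(x\|y).
\end{align}
Dropping the first (non-negative) term and multiplying by $2$ yields the key inequality
\begin{align}
\sum_{x\in\bsn}\sum_{y\neq f(x)} D'(x\|y) \;\leq\; 2\epsilon.
\end{align}

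Next, let $N := 2^{2n}-2^n$ denote the number of pairs $(x,y)$ with $y \neq f(x)$, and let $B$ be the set of such pairs satisfying $D'(x\|y) > 4\epsilon/N$. Suppose for contradiction that $|B| > N/2$. Then the contribution of $B$ alone to the sum above strictly exceeds $(N/2)\cdot(4\epsilon/N) = 2\epsilon$, contradicting the inequality just derived. Hence $|B| \leq N/2$, so at least $N/2 = \tfrac{1}{2}(2^{2n}-2^n)$ off-support pairs $x\|y$ satisfy $D'(x\|y) \leq 4\epsilon/(2^{2n}-2^n)$, as claimed.

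There is no real obstacle here — the lemma is essentially a Markov inequality adapted to the counting measure on off-support pairs, and the only care needed is extracting the right one-sided bound (factor of $2$) from the definition of $d_{TV}$. The constant $4$ in the statement is precisely what makes the Markov step clean: it provides the factor-of-two slack between ``average off-support mass'' $2\epsilon/N$ and the threshold $4\epsilon/N$ that is required to conclude at least half the pairs lie below it.
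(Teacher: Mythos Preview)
Your proof is correct and follows essentially the same approach as the paper: both arguments split the total variation sum into the on-support and off-support parts, bound the off-support mass by $2\epsilon$, and then obtain a contradiction by assuming more than half of the $N=2^{2n}-2^n$ off-support pairs exceed the threshold $4\epsilon/N$. Your presentation via an explicit Markov-style counting of the ``bad set'' $B$ is slightly cleaner logically than the paper's direct contradiction, but the substance is identical.
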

\begin{proof}
Per contradiction, assume that the claim is false, and therefore, for at least $\frac{1}{2}\times \left( 2^{2n}-2^n \right)$ of the strings $x\|y$ with $y\neq f(x)$, it holds that
\begin{align}
    D'(x\| y) > \frac{4\epsilon}{2^{2n}-2^n}. 
\end{align}
From this, it follows that 
\begin{align}
      d_{TV}(D_{f,U}, D') &= \frac{1}{2} \sum_{x,y \in \{0,1\}^{n}} {\left| D_{f,U}(x\|y) - D'(x\|y) \right|}\\
      \nonumber
    &= \frac{1}{2} \sum_{x\in \{0,1\}^n}{\left| \frac{1}{2^n} - D'(x\| f(x))  \right|} + \sum_{\text{other } x,y} {\left| D'(x\|y) \right|}\\
    \nonumber
    &\geq \frac{1}{2} \underbrace{ \sum_{\text{other } x,y} {\left| D'(x\|y) \right|} }_{(2^{2n}-2^n)\text{ many}}\\
    &> \frac{1}{2} \times \frac{1}{2} \left( 2^{2n}-2^n \right) \left[ \frac{4\epsilon}{2^{2n}-2^n} \right]\\
    \nonumber
    &= \epsilon,
\end{align}
which contradicts the assumption.
\end{proof}
Given the above lemmas, we can now prove Theorem \ref{theo:classical_eval_hardness}.
\begin{proof}[Proof for Theorem  \ref{theo:classical_eval_hardness}] The proof will be by contradiction. To do this, we assume that, for some $\epsilon<1/9$ and ${\delta \leq 1/5-\Omega(1/\mathrm{poly}(n))}$, there exists a classical efficient $(\epsilon,\delta)$-PAC evaluator learner of $\CN$ - i.e., a polynomial time probabilistic classical algorithm $\tilde{\mathcal{A}}$, which
for all $\DPK \in \CN$, when given sample access to $\DPK$, outputs with probability at least
$1-\delta$, an evaluator $\EVAL_{D'}$ for some distribution $D'$ satisfying $d_{TV}(\DPK,D')\leq \epsilon$. We now use this assumption to construct an efficient classical algorithm $\mathcal{A}$ that uses $\tilde{\mathcal{A}}$ to polynomially strongly infer $\mathcal{F}$. This strong polynomial-inference is per definition of $\mathcal{F}$ not possible, resulting in the
sought contradiction.

So, let us describe algorithm $\mathcal{A}$:
When given access to $\REX(F_\theta(k,\cdot), U)$, algorithm $\mathcal{A}$ starts by simulating algorithm $\tilde{\mathcal{A}}$. In particular, for every query made by $\tilde{\mathcal{A}}$, algorithm $\mathcal{A}$ queries $\REX(F_\theta(k,\cdot), U)$, obtains some tuple $(x,F_\theta(k, x))$, and then passes the string $x||F_\theta(k, x)$ to $\tilde{\mathcal{A}}$. As this is indistinguishable from a sample query to the distribution $D_{(\theta,k)}$, algorithm $\tilde{\mathcal{A}}$ will, after a polynomial number of queries, output with probability at least $1-\delta$, an evaluator $\EVAL_{D'}$ for some distribution $D'$ satisfying $d_{TV}(\DPK,D')\leq \epsilon$.

% Let algorithm $\mathcal{A}$ be the algorithm with oracle access
% $\REX(F_\theta(k,\cdot), U)$ that simulates $\tilde{\mathcal{A}}$.
% Any of the $t$ sample access calls made by $\tilde{\mathcal{A}}$ are answered
% by $\mathcal{A}$ with $x_i\|F_\theta(k, x_i)$ through the use of
% $\REX(F_\theta(k, \cdot), U) = (x_i, F_\theta(k, x_i))$ for
% $i=1,\dots,t$. $\tilde{\mathcal{A}}$ will then output an evaluator $\EVAL_{D'}$ for which $d_{TV}(\DPK,D')\leq \epsilon$ with probability at least $1-\delta$.
% Note that Lemma \ref{lem:alpha1} and \ref{lem:alpha2} hold for $D'$. \ryan{[This previous paragraph can be simplified]}\\
At this stage, $\mathcal{A}$ is ready to take the random example inference exam, and when presented the
two exam pairs $s_1=(x', f_1)$ and  $s_2=(x', f_2)$,
$\mathcal{A}$ will run the strategy presented in  Algorithm \ref{alg:evalqinfer} to determine which of
the two values $f_1,f_2$ is $F_\theta(k, x')$.

\begin{comment}
\begin{algorithm}
    \caption{Exam strategy to Q-infer $\mathcal{F}$}\label{alg:evalqinfer}
    \begin{algorithmic}[1]
        \Require Exam strings: $s_1, s_2 \in \{0,1\}^{2n}$, Evaluator: $\EVAL_{\DPK'}$, parameters: $n$, $\epsilon$
        \State $p_1 \leftarrow \EVAL_{\DPK'}(s_1)$
        \State $p_2 \leftarrow \EVAL_{\DPK'}(s_2)$
        \If {$p_1 \geq \frac{\epsilon}{2^n}$ and $p_2 \leq \frac{4\epsilon}{2^{2n}-2^n}$}
        \State return 1
        \ElsIf {$p_1 \leq \frac{4\epsilon}{2^{2n}-2^n}$ and $p_2  \geq \frac{\epsilon}{2^n}$}
        \State return 2
        \EndIf
        \State return $random \sim U(\{1,2\})$
    \end{algorithmic}
\end{algorithm}
\end{comment}

\begin{algorithm}
    \caption{Exam strategy to infer $\mathcal{F}$}\label{alg:evalqinfer}
    \SetKwInOut{Input}{Input}
    \SetKwInOut{Output}{Output}
    \Input{Exam strings: $s_1, s_2 \in \{0,1\}^{2n}$, Evaluator: $\EVAL_{D'}$, parameters: $n$, $\epsilon$}
    \Output{Index of the exam string that has been produced by $F_\theta(k, x')$}
    \nl $p_1 \leftarrow \EVAL_{D'}(s_1)$\;
    \nl $p_2 \leftarrow \EVAL_{D'}(s_2)$\;
    \nl \uIf{$p_1 \geq \frac{\epsilon}{2^n}$ and $p_2 \leq \frac{4\epsilon}{2^{2n}-2^n}$}{
        \nl return 1\;
    }
    \nl \uElseIf{$p_1 \leq \frac{4\epsilon}{2^{2n}-2^n}$ and $p_2  \geq \frac{\epsilon}{2^n}$}{
        \nl return 2\;
    }
    \nl \Else{
        \nl return $random \sim U(\{1,2\})$\;
    }
\end{algorithm}

We will now analyse the probability that $\mathcal{A}$ passes the random example inference
 exam.
 Firstly, note that for $n \geq 3$, we have that $\epsilon/2^n > 4\epsilon/(2^{2n}-2^n)$ and
 thus the conditions in lines \circled{3} and \circled{5} of Algorithm \ref{alg:evalqinfer} cannot be true at the same time.
Additionally, if algorithm $\tilde{\mathcal{A}}$ was successful, which happens with probability at least $1-\delta$, then it follows from Lemma \ref{lem:alpha1} and \ref{lem:alpha2}, as well as the promise that $\epsilon < 1/9$, that
 
 % Firstly, note that for $n \geq 3$, we have that $\epsilon/2^n > 4\epsilon/(2^{2n}-2^n)$ and
 % thus the conditions in lines \circled{3} and \circled{5} of Algorithm \ref{alg:evalqinfer} cannot be true at the same time.
 % Additionally, following from Lemma \ref{lem:alpha1} and \ref{lem:alpha2}, as well as the promise that $\epsilon < 1/9$, we have 
 % \begin{align*}
 %     &\underset{\tilde{x}' \sim U(\mathcal{X}_P)}{\Prob}\left[\EVAL_{\DPK'}(\tilde{x}' \| F_P(k, \tilde{x}')) \geq \frac{\epsilon}{2^n}\right] &\geq (1-\alpha_1)\\
 %     &\underset{\tilde{x}' \sim U(\mathcal{X}_P)}{\Prob}\left[\EVAL_{\DPK'}(\tilde{x}' \| F_P(k, \tilde{x}')) < \frac{\epsilon}{2^n}\right]  &< \alpha_1\\
 %     &\underset{\tilde{x}' \sim U(\mathcal{X}_P)}{\Prob}\left[\EVAL_{\DPK'}(\tilde{x}' \| u) \leq \frac{4\epsilon}{2^{2n}-2^n}\right] &= \frac{1}{2}\\
 %     &\underset{\tilde{x}' \sim U(\mathcal{X}_P)}{\Prob}\left[\EVAL_{\DPK'}(\tilde{x}' \| u) > \frac{4\epsilon}{2^{2n}-2^n}\right] &= \frac{1}{2}.
 % \end{align*}
 % \ryan{[The typesetting of the above equations doesnt look good to me, I would suggest:]}
\begin{align}
     \underset{x' \sim U(\mathcal{X}_\theta)}{\Prob}\left[\EVAL_{D'}(x' \| F_\theta(k, x')) \geq \frac{\epsilon}{2^n}\right] &\geq \frac{3}{4}  ,\\
     \underset{x' \sim U(\mathcal{X}_\theta)}{\Prob}\left[\EVAL_{D'}(x' \| F_\theta(k, x')) < \frac{\epsilon}{2^n}\right]  &< \frac{1}{4}  ,\\
     \underset{\begin{substack} {x' \sim U(\mathcal{X}_\theta) \\ u\sim U(\mathcal{Y}_\theta \setminus \{F_\theta(k, x')\})} \end{substack}}{\Prob}\left[\EVAL_{D'}(x' \| u) \leq \frac{4\epsilon}{2^{2n}-2^n}\right] & \geq \frac{1}{2} ,\\
     \underset{ \begin{substack} {x' \sim U(\mathcal{X}_\theta) \\ u\sim U(\mathcal{Y}_\theta \setminus \{F_\theta(k, x')\})} \end{substack} }{\Prob}\left[\EVAL_{D'}(x' \| u) > \frac{4\epsilon}{2^{2n}-2^n}\right] &< \frac{1}{2}.
\end{align}
 From the above, we can now bound the probability that algorithm $\mathcal{A}$ is successful, conditioned on $\tilde{\mathcal{A}}$ being successful. To do this, we denote the event that "Algorithm \ref{alg:evalqinfer} returns on line \circled{$l$} given that $\tilde{\mathcal{A}}$ was successful" by "\circled{$l$}". Using this, we have the following.\newline

 \noindent \textbf{Case 1.} $s_1 = x' \| F_\theta(k, x')$ and $s_2 = x' \| u$:\\
 \begin{align}
     &\Prob\left[\text{\circled{4}}\right] \geq \frac{1}{2} \times \frac{3}{4} =\frac{3}{8},\\
     &\Prob\left[\text{\circled{6}}\right] < \frac{1}{2} \times \frac{1}{4} =\frac{1}{8} ,
     % &\Prob\left[\text{\circled{8}}\right] = 1 - \left( \Prob\left[\text{\circled{4}}\right] + \Prob\left[\text{\circled{6}}\right] \right)\\
 \end{align}
 and therefore
 \begin{align}
 \Prob\left[ \text{Algorithm \ref{alg:evalqinfer} returns 1} \mid \tilde{\mathcal{A}}\text{ successful} \right]
      &=  \Prob\left[\text{\circled{4}}\right] + \frac{1}{2} \Prob\left[\text{\circled{8}}
      \right] \\
      \nonumber
      &=\Prob\left[\text{\circled{4}}\right] + \frac{1}{2}\left[ 1 - \left( \Prob\left[\text{\circled{4}}\right] + \Prob\left[\text{\circled{6}}\right] \right)\right] \\
      \nonumber
      &= \frac{1}{2} + \frac{1}{2} \Prob\left[\text{\circled{4}}\right] - \frac{1}{2} \Prob\left[\text{\circled{6}}\right]
      ,\\
      \nonumber
      &\geq \frac{5}{8}.
 \end{align}

\noindent \textbf{Case 2.} $s_1 = x' \| u$ and $s_2 = x' \| F_\theta(k, x')$:\\
 \begin{align}
     &\Prob\left[\text{\circled{4}}\right] < \frac{1}{2} \times \frac{1}{4} =\frac{1}{8}, \\
     &\Prob\left[\text{\circled{6}}\right] \geq \frac{1}{2} \times \frac{3}{4} =\frac{3}{8},
     % &\Prob\left[\text{\circled{8}}\right] = 1 - \left( \Prob\left[\text{\circled{4}}\right] + \Prob\left[\text{\circled{6}}\right] \right)\\
 \end{align}
  and therefore
 \begin{align}
 \Prob\left[ \text{Algorithm \ref{alg:evalqinfer} returns 2} \mid \tilde{\mathcal{A}}\text{ successful} \right]
      &=  \Prob\left[\text{\circled{6}}\right] + \frac{1}{2} \Prob\left[\text{\circled{8}}\right] \\
      \nonumber
      &=\Prob\left[\text{\circled{6}}\right] + \frac{1}{2}\left[ 1 - \left( \Prob\left[\text{\circled{6}}\right] + \Prob\left[\text{\circled{4}}\right] \right)\right] \\
      \nonumber
      &= \frac{1}{2} + \frac{1}{2} \Prob\left[\text{\circled{6}}\right] - \frac{1}{2} \Prob\left[\text{\circled{4}}\right]\\
      \nonumber
      &\geq \frac{5}{8}.
 \end{align}
 
\noindent Thus, taking both cases together, we have that
\begin{equation}
\Prob\left[ \text{Algorithm \ref{alg:evalqinfer} returns correct index} \mid \tilde{\mathcal{A}}\text{ successful}\right] \geq \frac{5}{8}.
\end{equation}
Using the above, we can now lower bound the probability that $\mathcal{A}$ passes the exam,
to get
\begin{align}
\Prob\left[ \mathcal{A} \text{ passes exam}\right] &= \Prob\left[ \text{Alg. \ref{alg:evalqinfer} returns correct index} \mid \tilde{\mathcal{A}}\text{ successful}\right]\times \mathrm{Pr}[\tilde{\mathcal{A}}\text{ successful}] \\
\nonumber
     &\qquad\qquad+ \Prob\left[ \text{Alg. \ref{alg:evalqinfer} returns correct index} \mid \tilde{\mathcal{A}}\text{ unsuccessful}\right]\times \mathrm{Pr}[\tilde{\mathcal{A}}\text{ unsuccessful}]\\
     \nonumber
     &\geq \Prob\left[ \text{Alg. \ref{alg:evalqinfer} returns correct index} \mid \tilde{\mathcal{A}}\text{ successful}\right]\times \mathrm{Pr}[\tilde{\mathcal{A}}\text{ successful}] \\
     \nonumber
     &\geq \frac{5}{8}(1-\delta)\\
     \nonumber
     &=\frac{5}{8} - \frac{5}{8}\delta \\
     \nonumber
     &\geq \frac{5}{8} - \frac{5}{8}\left[\frac{1}{5}-\Omega\left(\frac{1}{\mathrm{poly}(n)}\right)\right]\\
     \nonumber
     &= \frac{1}{2} + \Omega\left(\frac{1}{\mathrm{poly}(n)}\right).
\end{align}
Therefore, $\mathcal{A}$ polynomially strongly infers $\mathcal{F}$, which contradicts the assumption that $\mathcal{F}$ is weak-secure pseudo-random.

\end{proof}
% \ryan{[R: Notice that in the proof above, we completely neglected the probability that $\mathcal{A}$ might succeed when $\tilde{\mathcal{A}}$ is unsuccesful. Including this (which is not so straightforward) might allow us to increase the range of $\delta$ for which hardness holds, but I think what we have is already quite strong.]}

\section{A quantum-classical separation for density modelling} \label{subsec:quanteasy}

In this work, we are interested in obtaining a quantum-classical separation for density modelling (evaluator learning).
So far, we have seen in
Section \ref{subsec:classhard} that when we instantiate the ``function to distribution'' construction with a weak-secure PRF, we can achieve a classical hardness result. 
The question is therefore, is there a weak-secure PRF which allows us to also prove an efficient quantum learning result?
In this section we show that by using the PRF previously utilized in 
Ref.~\citep{sweke_generator_2021} to show a separation for GEN learning, we can also achieve a separation for EVAL learning.

% Introduce index set (this can be done in 2 lines, like in the paragraph abov eq (39) in our original paper)
To begin, we restate the definition of the PRF used in 
Ref.~\citep{sweke_generator_2021}, and we refer there for additional details and discussion. Let $p \in \mathbb{N}$, we say that an element $y \in \Zq$ is a quadratic residue modulo $p$ if there exists an $x\in \Zq$ such that $x^2 \equiv y \mod p$.
Additionally, we say that $p$ is a safe prime if  $p = 2q +1$ with $q$ prime.
Let $\QRp$ be the set of quadratic residues modulo $p$ and $\left\{ \QRp \right\}$ be the set of such sets where $p$ is a safe prime.
Define the parameter set $\mathcal{P}_{(p,g,g^a)}$ as the infinite set of all tuples of the form $(p,g,g^a)$
where $p$ is some safe prime, $g$ is a generator for $\QRp$ and $a \in \Zq$.
We denote the subset of all such tuples in which $p$ is an $n$-bit prime as $\mathcal{P}_{n, (p,g,g^a)}$ and we note that $\mathcal{P}_{(p,g,g^a)} = \bigcup_{n\in \mathbb{N}} \mathcal{P}_{n,(p,g,g^a)}$ is an efficiently sampleable parameter set (see Ref.~\cite{sweke_generator_2021} for a description of the efficient instance generation algorithm).
% Introduce the $G^0$ and $G^1$ functions (this is also a few lines of equations)
Now, given some safe prime $p=2q+1$, define the function $f_p : \QRp \rightarrow \Zq$ via
\begin{align}
    f_p(x)= \begin{cases}x & \text { if } x \leq q \\ p-x & \text { if } x>q\end{cases}.
\end{align}
This allows us to define the functions $\tilde{G}_{(p, g, g^a)}^0$ and  $\tilde{G}_{(p, g, g^a)}^1 : \Zq \rightarrow \Zq$ via
\begin{align}
    \tilde{G}_{(p, g, g^a)}^0(b) &:= f_p(g^a \mod p), \\
    \tilde{G}_{(p, g, g^a)}^1(b) &:= f_p(g^{a b} \mod p).
\end{align}
% Note that $\tilde{G}$ follows the notation in Ref. \citep{sweke_generator_2021}.
% Define $F$ algorithmically (we just copy the algorithm from the other paper)
With this in hand, we can finally define the  function collection $\left\{ \Fpgga \mid (p,g,g^a) \in \mathcal{P}_{p,g,g^a} \right\}$, where
% Given the definitions above, we can now give the basis of the construction of the classic-secure PRF collection $\left\{ \Fpgga \mid (p,g,g^a) \in \mathcal{P}_{p,g,g^a} \right\}$, where
\begin{align}
    \Fpgga : \Zq \times \bsn \rightarrow \Zq
\end{align}
is defined algorithmically in Algorithm~\ref{alg:fcompute} below:
\begin{algorithm}
    \caption{Algorithmic implementation of $F_{(p,g,g^a)}$}\label{alg:fcompute}
    \SetKwInOut{Input}{Input}
    \SetKwInOut{Output}{Output}
    \Input{Function input: $x \in \bsn$, Secret key: $k \in \Zq$, Parameters: $p, g, g^a$}
    \Output{The function value of $F_{(p,g,g^a)}(k, x)$}
    \nl $b_0 \leftarrow k$\;
    \nl\For{$1 \leq j \leq n$}{
        \nl \uIf{$x_j = 0$}{
            \nl $b_j \leftarrow \tilde{G}_{(p, g, g^a)}^0(b_{j-1})$;
        }
        \nl \uElseIf{$x_j = 1$}{
            \nl $b_j \leftarrow \tilde{G}_{(p, g, g^a)}^1(b_{j-1})$;
        }
    }
    \nl return $b_n$;
\end{algorithm}

As shown in Ref.~\cite{sweke_generator_2021} the function collection $\left\{ \Fpgga \mid (p,g,g^a) \in \mathcal{P}_{p,g,g^a} \right\}$ is a 
classic-secure PRF collection, under the Decisional Diffie-Hellman (DDH) assumption\footnote{A reader familiar with pseudorandom functions may recognize Algorithm~\ref{alg:fcompute} as the \emph{Goldreich-Goldwasser-Micali} construction of a PRF, from the pseudorandom generator implicit in the \emph{Diffie-Hellman} assumption.}. Given that any classic-secure PRF is also a weak-PRF, we know (from Theorem~\ref{theo:classical_eval_hardness}) that we can instantiate the "function to distribution" construction with $\{F_{(p,g,g^a)}(k, \cdot)\,|\, (p,g,g^a)\in\mathcal{P}_{p,g,g^a}, k\in \Zq\}$ to obtain a distribution class $\mathcal{D} = \{D_{(p,g,g^a),k}\,|\,(p,g,g^a)\in\mathcal{P}_{p,g,g^a}, k\in \Zq\}$ which is hard to evaluator learn. However, as done 
in Ref.~\cite{sweke_generator_2021} we will in fact consider a slight modification of the induced distribution class -- for reasons which will soon become clear -- in which an encoding of the parameters $(p,g,g^a)$ is appended onto the samples. More specifically, recall that one samples from $D_{(p,g,g^a),k}$ by first drawing $x\leftarrow \{0,1\}^n$, and then outputting $x\|F_{(p,g,g^a)}(k,x)$. We define the distribution $\tilde{D}_{(p,g,g^a),k}$ as the distribution which is sampled from by first drawing $x\leftarrow \{0,1\}^n$, and then outputting $x\|F_{(p,g,g^a)}(k,x)\|(p,g,g^a)$ - i.e., the exact same distribution as $D_{(p,g,g^a),k}$, but with the parameters appended to each sample. Naturally, we then define the distribution class
\begin{equation}
\tilde{\mathcal{D}} = \{\tilde{D}_{(p,g,g^a),k}\,|\,(p,g,g^a)\in\mathcal{P}_{p,g,g^a}, k\in \Zq\}.
\end{equation}
% we define the distribution class $\tilde{\mathcal{D}}$ via
% % The formalisms above strictly follow Ref. \citep{sweke_generator_2021} and we refer the reader to their work for
% % all details.
% % Most importantly, the construction of $F_{(p,g,g^a)}(k, \cdot)$ is the Goldreich-Golwasser-Micali construction \citep{goldreich_randomfunctions_1986} of a classic-secure PRF, when $k$ is unknown.
% % Since any classic-secure PRF is also weak-secure, we can use $\left\{ F_{(p,g,g^a)}(k,\cdot) \right\}$ as the functions whose induced distributions are hard-to-evaluator-learn with a classical computer.
% % In fact, we consider the same distribution class as in Ref. \citep{sweke_generator_2021}, which is a slight modification of the induced distribution, in that we append the parameters to the samples. Thus, define the distribution class $\CN$ by
% \begin{align}
%     \CN = \left\{\Dpgga \mid (p,g,g^a) \in \mathcal{P}_{n,(p,g,g^a)} \text{, } k \in \Zq \right\} \text{,}
% \end{align}
% where a sample of $\Dpgga$ is of the form
% \begin{align}
%     x\|F_{(p,g,g^a)}(k,x)\|(p,g,g^a) \text{ with } x \sim U(\bsn).
% \end{align}
As discussed in Ref.~\cite{sweke_generator_2021}, given the fact that any candidate inference algorithm for a PRF (or weak PRF) is also given the parameters of the unknown PRF (see Figure~\ref{fig:inference}), the proof of Theorem~\ref{theo:classical_eval_hardness} is unaffected if one uses the distribution class $\tilde{\mathcal{D}}$ in place of $\mathcal{D}$. As such, we obtain the following Corollary from Theorem~\ref{theo:classical_eval_hardness}, and the fact that $\left\{ \Fpgga \mid (p,g,g^a) \in \mathcal{P}_{p,g,g^a} \right\}$ is a classic-secure PRF collection under the DDH assumption: 

% \begin{corollary}[Evaluator learning hardness of $\tilde{\mathcal{D}}$]
%     \label{cor:ddh_sep}
%     \ryan{Under the Decisional Diffie Hellman assumption}The distribution class $\CN$  is not classically efficiently $(\frac{1}{poly(n)}, \frac{1}{2})$-PAC evaluator learnable. \ryan{These parameters are wrong}
% \end{corollary}

\begin{corollary}[Evaluator learning hardness of $\tilde{\mathcal{D}}$]
    \label{cor:ddh_sep}
    Under the Decisional Diffie Hellman assumption, for all sufficiently large $n$, all $\epsilon< 1/9$ and all $\delta\leq 1/5 - \Omega(1/\mathrm{poly}(n))$, there is no efficient classical $(\epsilon,\delta)$-PAC EVAL learner for $\tilde{\mathcal{D}}$.
\end{corollary}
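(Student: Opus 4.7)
The plan is to reduce everything to Theorem~\ref{theo:classical_eval_hardness} via three observations: (i) the DDH assumption makes $\{F_{(p,g,g^a)}\}$ a classic-secure PRF, (ii) classic-secure implies weak-secure (as noted after Definition~\ref{def:prfs}, since membership queries can simulate random-example queries), and (iii) appending the public parameters $(p,g,g^a)$ to every sample does not help the classical learner. The first two steps are immediate invocations of results already established in the excerpt, so the only real content is showing (iii).

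For (iii), I would argue by contradiction exactly as in the proof of Theorem~\ref{theo:classical_eval_hardness}. Suppose $\tilde{\mathcal{A}}$ is an efficient $(\epsilon,\delta)$-PAC EVAL learner for $\tilde{\mathcal{D}}$, with $\epsilon<1/9$ and $\delta\leq 1/5-\Omega(1/\mathrm{poly}(n))$. I would construct a strong inference algorithm $\mathcal{A}$ for the PRF collection that, on input the parameter $\theta=(p,g,g^a)$ and with access to $\REX(F_\theta(k,\cdot),U)$, simulates a sample oracle for $\tilde{D}_{(p,g,g^a),k}$ by querying $\REX$ for a pair $(x,F_\theta(k,x))$ and forwarding $x\|F_\theta(k,x)\|(p,g,g^a)$ to $\tilde{\mathcal{A}}$. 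The crucial point is that $\mathcal{A}$ knows $\theta$ (this is part of the strong inference setup in Definition~\ref{def:infexam}), so the appended parameters can be produced at no cost, and the simulated samples are identically distributed to genuine samples from $\tilde{D}_{(p,g,g^a),k}$.

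Once $\tilde{\mathcal{A}}$ returns an evaluator $\EVAL_{D'}$ that is $\epsilon$-close in total variation to $\tilde{D}_{(p,g,g^a),k}$, I would define $\EVAL_{D''}(x\|y):=\EVAL_{D'}(x\|y\|(p,g,g^a))$ and observe that $D''$ is $\epsilon$-close to $D_{(p,g,g^a),k}$: the total variation distance is preserved because marginalizing over the deterministic suffix simply restricts the sum. From this point the proof reuses Algorithm~\ref{alg:evalqinfer} verbatim, together with Lemmas~\ref{lem:alpha1} and~\ref{lem:alpha2}, to pass the strong inference exam with probability at least $1/2+\Omega(1/\mathrm{poly}(n))$, contradicting Lemma~\ref{lemma:qinf_prf} and hence the weak-pseudorandomness (and therefore the DDH assumption).

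The main (and essentially only) subtlety is verifying that the suffix $(p,g,g^a)$ really is costless to the reduction. One must confirm that $\mathcal{A}$'s input model in Definition~\ref{def:infexam} does supply $\theta$ and that the parameter encoding is efficiently writable, so that the simulation runs in polynomial time. All remaining calculations are routine and carry over without modification from the proof of Theorem~\ref{theo:classical_eval_hardness}.
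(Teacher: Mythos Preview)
Your proposal is correct and follows essentially the same approach as the paper: the paper's own justification is simply the one-line remark that the strong-inference adversary in Definition~\ref{def:infexam} already receives $\theta=(p,g,g^a)$ as input, so the proof of Theorem~\ref{theo:classical_eval_hardness} goes through unchanged for $\tilde{\mathcal{D}}$, and you have merely spelled out that observation in detail. Your construction of $\EVAL_{D''}$ by fixing the suffix is a clean way to make this explicit (note that $D''$ need not be normalized, but Lemmas~\ref{lem:alpha1} and~\ref{lem:alpha2} only require the one-sided bound on the $\ell_1$ sum, which your restriction inequality supplies).
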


We would now like to show that one can indeed obtain an efficient quantum evaluator learner for $\tilde{\mathcal{D}}$. To this end, we start with the following observation. 

% Observation of key dependence
% \begin{observation} [Exact evaluator from knowing the secret key]
%     For all $D_{(p,g,g^a),k} \in \CN$, if the secret key $k$ is known, we can get get an (exact) evaluator of $D_{(p,g,g^a),k}$.
% \end{observation}
\begin{observation} [Exact evaluator from knowing the secret key]
    For all $\tilde{D}_{(p,g,g^a),k} \in \tilde{\mathcal{D}}$, given the secret key $k$, along with parameters $(p,g,g^a)$, one can output an efficient exact evaluator of $\tilde{D}_{(p,g,g^a),k}$.
\end{observation}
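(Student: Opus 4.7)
The plan is to construct the evaluator explicitly from the algorithmic description of $F_{(p,g,g^a)}$, exploiting the fact that $\tilde{D}_{(p,g,g^a),k}$ has a trivially characterizable support once the secret key is known. First I would recall the structure of the distribution: a sample from $\tilde{D}_{(p,g,g^a),k}$ is obtained by drawing $x \sim U(\bsn)$ and emitting the string $x \| F_{(p,g,g^a)}(k,x) \| (p,g,g^a)$. Consequently, the support of $\tilde{D}_{(p,g,g^a),k}$ consists of exactly $2^n$ strings, each carrying uniform probability mass $1/2^n$, and every such string encodes the fixed parameter tuple $(p,g,g^a)$ in its suffix.

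Given this, I would describe the evaluator as follows. On input a bit string $z$, the evaluator first parses $z$ as $x \| y \| (p',g',h')$ according to the fixed lengths $n$, $|q|$, and $|\mathcal{P}_{n,(p,g,g^a)}|$. If the parsing fails (e.g., $z$ has the wrong length), it returns $0$. Otherwise it checks that $(p',g',h') = (p,g,g^a)$; if not, it returns $0$. Finally, using the provided secret key $k$ and the known parameters, it invokes Algorithm~\ref{alg:fcompute} to compute $F_{(p,g,g^a)}(k,x)$ and checks whether this equals $y$. If so, it outputs $1/2^n$; otherwise $0$.

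Correctness follows immediately from the definition of $\tilde{D}_{(p,g,g^a),k}$: the algorithm outputs $1/2^n$ precisely on strings in the support and $0$ elsewhere, matching $\tilde{D}_{(p,g,g^a),k}$ exactly. For efficiency, the parsing and equality checks are linear in the input length, and Algorithm~\ref{alg:fcompute} runs in time polynomial in $n$ since it performs $n$ iterations, each executing one modular exponentiation over an $n$-bit safe prime $p$, which is polynomial-time classical.

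Honestly, there is no substantive obstacle here: the whole content of the observation is that knowing $k$ reduces the evaluator to a single forward evaluation of $F_{(p,g,g^a)}$ together with a support membership test. The only mildly non-trivial aspect worth flagging is that one must make sure the parsing is unambiguous; this is handled by appending the parameters as a \emph{fixed-length} encoding of elements of $\mathcal{P}_{n,(p,g,g^a)}$, which is implicit in the sampling procedure for $\tilde{D}_{(p,g,g^a),k}$.
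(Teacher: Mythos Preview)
Your proposal is correct and follows essentially the same approach as the paper: both construct the evaluator by computing $F_{(p,g,g^a)}(k,x)$ via Algorithm~\ref{alg:fcompute} and returning $1/2^n$ if this matches $y$, and $0$ otherwise. Your version is slightly more careful about parsing and checking the appended parameter suffix, but the substance is identical.
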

The above observation can be easily understood by the considering the following evaluator
\begin{align}
    \EVAL_{\tilde{D}_{(p,g,g^a),k}}(x\|y\|(p,g,g^a)) =
    \begin{cases}
      \frac{1}{2^n}, &\text{ if } F_{(p,g,g^a)}(k, x)=y\\
      0, &\text{ else}
    \end{cases}.
\end{align}
It follows from the construction of $F_{(p,g,g^a)}$ that the evaluator $\EVAL_{\tilde{D}_{(p,g,g^a),k}}$ is computable in poly-time if both $k$, as well as $(p,g,g^a)$ are known.
% Quantum algorithm
In light of this, we see that learning an evaluator for $\tilde{D}_{(p,g,g^a),k}$ reduces to learning, from samples, the parameters $(p,g,g^a)$ as well as the secret key $k$. However, by design, the parameters $(p,g,g^a)$ come ``for free'' with each sample, and therefore, one only needs to learn the secret key $k$. To this end, we note that precisely such an algorithm has already been constructed in Ref.~\cite{sweke_generator_2021}. More specifically, Ref~\cite{sweke_generator_2021} has constructed an efficient quantum algorithm which, by using the exact quantum algorithm for discrete logarithms as a subroutine, can deterministically recover the secret key $k$ from samples from $\tilde{D}_{(p,g,g^a),k}$. Putting it all together, we see that by using the efficient (deterministic) quantum key-learning algorithm from Ref.~\cite{sweke_generator_2021}, coupled with fact that the parameters $(p,g,g^a)$ are given for free, and that together the key $k$ and parameters $(p,g,g^a)$ fully specify an (exact) evaluator for $\tilde{D}_{(p,g,g^a),k}$, we obtain the following corollary:

% for the quantum learning algorithm, when given sample access to $D_{(p,g,g^a),k}$ we can use the same algorithm as in Ref.~\citep{sweke_generator_2021} Algorithm 2 to learn the secret key $k$.
% This quantum learning algorithm requires a fault-tolerant quantum computer that
% is capable of calculating logarithms in $\Zq$ using Shor's algorithm \citep{shor_factoring_1999}.
% It is then straightforward, as observed above to output an efficient evaluator.
% Thus this learning algorithm learns an evaluator with $\epsilon=0$ and success
% probability $1-\delta=1$.
\begin{corollary} \label{cor:qddh}
    $\tilde{\mathcal{D}}$ is quantumly efficiently $(\epsilon=0, \delta=0)$-PAC evaluator learnable.
\end{corollary}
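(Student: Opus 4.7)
The plan is to compose three ingredients already assembled: (i) every sample from $\tilde{D}_{(p,g,g^a),k}$ carries an appended copy of the parameters $(p,g,g^a)$, so these are obtained for free from a single draw; (ii) the efficient quantum key-learning algorithm of Ref.~\cite{sweke_generator_2021}, which invokes the exact quantum discrete-logarithm subroutine, deterministically recovers the secret key $k$ from polynomially many samples of $\tilde{D}_{(p,g,g^a),k}$ in time $\mathrm{poly}(n)$; and (iii) the Observation preceding the corollary, which says that knowledge of $(p,g,g^a)$ together with $k$ is already enough to specify an exact, $\mathrm{poly}(n)$-time evaluator of $\tilde{D}_{(p,g,g^a),k}$ via $\Fpgga$ and Algorithm~\ref{alg:fcompute}.

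Concretely, the quantum learner I would write down proceeds as follows. First, draw a single sample $x\|y\|(p,g,g^a)\leftarrow \tilde{D}_{(p,g,g^a),k}$ and read off the last block to obtain $(p,g,g^a)$ with certainty. Second, draw a $\mathrm{poly}(n)$-sized batch of further samples and feed them into the quantum key-recovery routine of Ref.~\cite{sweke_generator_2021} to obtain $k\in\Zq$; this step is efficient and has no intrinsic failure probability, because the Shor-style discrete-logarithm subroutine on which it relies is exact and the surrounding reduction in Ref.~\cite{sweke_generator_2021} performs no randomized post-processing that could err. Third, output the hypothesis evaluator
\begin{align}
    \EVAL_{\tilde{D}_{(p,g,g^a),k}}(x\|y\|(p',g',g^{a'})) =
    \begin{cases}
      \tfrac{1}{2^n}, &\text{if } (p',g',g^{a'}) = (p,g,g^a) \text{ and } \Fpgga(k,x)=y,\\
      0, &\text{otherwise},
    \end{cases}
\end{align}
with the learned values $(p,g,g^a)$ and $k$ hard-coded. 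Since Algorithm~\ref{alg:fcompute} runs in polynomial time, this hypothesis is efficiently evaluable on any input.

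Correctness and the error accounting are then immediate: with probability $1$, the extracted parameters and key coincide with the true ones, so the output function is literally equal to the evaluator of the Observation for $\tilde{D}_{(p,g,g^a),k}$, giving $d_{TV}=0$ and an overall success probability of $1$, which is exactly the $(\epsilon=0,\delta=0)$ guarantee. The only place any subtlety could enter is in checking step (ii) — one must verify that the key-learning algorithm of Ref.~\cite{sweke_generator_2021} is indeed zero-error, and this is the main point to be careful about. Once one unpacks that its only quantum subroutine is the exact discrete-logarithm algorithm and that the surrounding classical glue is deterministic, no further technical work is required and the corollary follows.
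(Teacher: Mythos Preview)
Your proposal is correct and mirrors the paper's argument essentially verbatim: read off $(p,g,g^a)$ from a sample, invoke the deterministic quantum key-recovery routine of Ref.~\cite{sweke_generator_2021} (built on the exact discrete-logarithm subroutine) to obtain $k$, and then output the exact evaluator furnished by the preceding Observation. The only addition you make is the explicit parameter-match check in the hypothesis, which is a harmless refinement.
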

Juxtaposing Corollary~\ref{cor:ddh_sep} with Corollary~\ref{cor:qddh} yields a super-polynomial quantum-classical separation for density modelling, up to the \emph{Decisional Diffie-Hellman} assumption.

\section{Conclusions and outlook} \label{sec:discussion}

% In this work, we have assessed the computational power of quantum computers compared to classical computers in meaningful
% and mathematically well defined learning tasks.

In this work, we have provided a variety of rigorous insights into the relative power of classical and quantum computers for the task of density modelling.
Specifically, we first provided an overview of techniques for proving distribution learning hardness from various classes of functions.
Apart from providing a comprehensive picture of existing techniques, we have (a) provided a generalization of methods for proving distribution learning hardness from PAC hard-to-learn functions and (b) shown that weak-secure PRFs are sufficient to prove hardness of evaluator learning.
Given this, we have then shown that there exists a density modelling task which is provably hard for classical computers, but can be solved by an efficient quantum learning algorithm. This separation contributes to the relatively scarce collection of machine learning type problems for which one can rigorously prove a quantum advantage~\cite{onseparations}. 
% This learning separation is an important result for the progress towards discovering
% whether quantum computing offers enhancements to classical machine learning, especially since
% rigorous quantum learning advantages are scarce among the many heuristic approaches.
In our outlook, we like to formulate the following open research questions:
\begin{enumerate}
    \item Can one get a computational separation (possibly with a fault-tolerant quantum computer) for a realistic       learning task?
   Indeed, the learning task considered in this work involved a synthetic and highly fine-tuned distribution that almost certainly does not appear naturally and is not of any practical relevance. As such, it is still an open question whether one can find a practical -- or ``real world'' -- learning task for which quantum computers offer a super-polynomial speedup.
    
    \item Furthermore, a major question is whether (even for learning problems that are synthetic and not of any practical relevance) one can prove a quantum advantage using a quantum algorithm that works on noisy, near-term quantum devices instead of large-scale error-corrected quantum computers. Indeed, it is the hope that by formalizing and abstracting methods for proving classical hardness results in distribution learning, this work stimulates  and facilitates such research efforts.
        
    \item To that end, it is an interesting research question to find a quantum-classical
        learning separation based on weak- but not classic-secure PRFs. More specifically, a separation which requires a weaker assumption than that necessary for the existence of classic-secure PRFs. The hope is that whichever assumption is used for classical hardness, can be broken by \textit{near-term} quantum devices. 
        What first comes to mind when pursuing this idea, is to use weak-secure PRFs based
        on the hardness of \emph{learning parity with noise} (LPN) \citep{bogdanov_pseudo-random_2017}. In particular, while learning such a PRF classically is believed to be hard, there are efficient quantum learning 
        algorithms \citep{cross_quantumlpn_2015}. However, these quantum algorithms require access to a \textit{quantum} random example oracle, and it is not clear how to overcome this limitation.
        % However, showing a distribution learning separation using such a construction would face the problem that the efficient quantum learning algorithm uses a quantum superposition oracle, while classical algorithms are restricted to a random example oracle.
        % Comparing learning algorithms based on those different oracle assumptions is unfair, as has been shown in other work \citep{tang_quantum_2021}.
        % The natural step would then be to elevate the power of the classical oracle to level the playing field.
        % Some work along those lines considered the sample and query access \citep{tang_quantum_2021}
        % or require the distribution to be efficiently integrable \citep{grover_stateprep_2002}.
        % In order to match the assumption of the superposition oracle available to the quantum learner, one could assume the abilities above for the classical learning algorithm.
        % It can, however, be shown that such oracles can efficiently simulate an evaluator of the
        % induced distribution, which in turn (by the same strategy used in Theorem \ref{theo:warmup})
        % allows efficient membership query access to the underlying PRF.
        % Since by construction, the PRF is weak-secure and not classic-secure, having membership
        % query access enables the classical algorithm to learn the function (and therefore the induced distribution) efficiently.
        % It follows that this construction based on the hardness of LPN does not yield a classical-quantum separation for distribution learning, if one considers more fair oracle assumptions as mentioned above. 
        Indeed, it is an interesting question of independent interest whether there exist candidate weak-PRFs, which are not secure against near-term quantum adversaries.
        
    \item Finally, the answers to the open questions in Table \ref{table:overview} are certainly interesting and important.
\end{enumerate}

\section*{Acknowledgements}

We would like to thank Thomas Vidick for discussions. R.~S.~is very grateful to Alex Nietner, Marcel Hinsche and Marios Ioannou for many discussions and insights into both quantum and classical distribution learning.
The authors acknowledge partial funding by the Einstein Research Unit ``Perspectives of a quantum digital transformation: Near-term quantum computational devices and quantum processors'' of the Berlin University Alliance. J.~E.~and R.~S.~have also received funding from the MATH+ Cluster of Excellence, the BMWK (PlanQK, 
EniQmA), the BMBF (Hybrid), and the QuantERA (HQCC). This research is also part of the Munich Quantum Valley (K8), which is supported by the Bavarian state government with funds from the Hightech Agenda Bayern Plus. 
J.-P.~S.~received funding from the Berlin Institute for the Foundations of Learning and Data (BIFOLD).

%\printbibliography  % For use with biblatex
%\bibliography{bibliography} % For use with natbib

\end{document}